\newtheorem{theorem}{Theorem}
\newtheorem{lemma}{Lemma}
\newtheorem{corollary}{Corollary}
\newtheorem{prop}{Proposition}
\newcommand{\Ex}{ \mathbb{E}  }
\begin{document}
\title{\vspace{-0.5cm}\LARGE Matched Multiuser Gaussian Source Channel Communications \\via Uncoded Schemes}
\author{Chao Tian, Jun Chen, Suhas N. Diggavi and Shlomo Shamai
\thanks{This work is presented in part at the 2015 IEEE International Symposium on Information Theory, Hong Kong, China, June 2015. The work of S. Shamai has been supported by the Israel Science Foundation (ISF), and by the S. and N. Grand Research Fund. The work of S N. Diggavi was supported in part by NSF grants CCF-1314937  and CCF-1514531}}
%\date{}
\maketitle

\abovecaptionskip=0.2cm
\abovedisplayskip=3pt
\belowdisplayskip=4pt
\begin{abstract}
We investigate whether uncoded schemes are optimal for Gaussian sources on multiuser Gaussian channels. Particularly, we consider two problems:  the first is to send correlated Gaussian sources on a Gaussian broadcast channel where each receiver is interested in reconstructing only one source component (or one specific linear function of the sources) under the mean squared error distortion measure; the second is to send correlated Gaussian sources on a Gaussian multiple-access channel, where each transmitter observes a noisy combination of the source, and the receiver wishes to reconstruct the individual source components (or individual linear functions) under the mean squared error distortion measure. It is shown that when the channel parameters match certain general conditions, the induced distortion tuples are on the boundary of the achievable distortion region, and thus optimal. Instead of following the conventional approach of attempting to characterize the achievable distortion region, we ask the question whether and how a match can be effectively determined. This decision problem formulation helps to circumvent the difficult optimization problem often embedded in region characterization problems, and it also leads us to focus on the critical conditions in the outer bounds that make the inequalities become equalities, which effectively decouple the overall problem into several simpler sub-problems. Optimality results previously unknown in the literature are obtained using this novel approach. As a byproduct of the investigation, novel outer bounds are derived for these two problems. 
\end{abstract}

\section{Introduction}
\label{sec:intro}
Although the source channel separation architecture is asymptotically optimal in the point-to-point communication setting \cite{Shannon:48} as well as several classes of multiuser communication settings (see {\em e.g.,} \cite{Tian:14} and references therein), uncoded schemes have several particularly attractive properties. Firstly, they have very simple encoders and decoders; secondly, they belong to the so-called zero-delay codes, which can avoid the long delay required to approach the asymptotic performance in the separation-based schemes; lastly, they are in fact optimal in some settings where the separation-based schemes are not (see {\em e.g.},  \cite{Goblick65}).

It was shown in \cite{Gastpar:03} that uncoded schemes are optimal when certain matching conditions involving the source probability distribution, the channel transition probability distribution, the channel cost function and the distortion measure function are satisfied. Though the focus in \cite{Gastpar:03} was mainly on the point-to-point setting, recent results \cite{Gastpar:08,Lapidoth:10BC,Lapidoth:10MAC,Tian:11} suggest that the concept of matching indeed carries over  to the multiuser case. In fact, in multiuser settings, matching may occur naturally when the distortion measure, the channel cost function and source distribution are all fixed, and the channel parameters, which represent physically meaningful quantities, satisfy certain conditions. In this work, we consider such matching, particularly, when the sources and the channels are Gaussian, the channel constraints are on the expected average signal powers, the distortion measure is the mean squared error (MSE), and only the channel parameters, such as the channel  amplification factors and the additive noise powers, are allowed to vary.

In this context, of interest is whether for a fixed source and fixed coding parameters, the distortion vector such induced is on the boundary of the achievable distortion region and thus optimal. More specifically, we seek to answer the following questions:
\begin{itemize}
\item Is there a set of  (explicitly) computable conditions that can be used to certify a fixed uncoded scheme to be optimal for a given source and channel pair?
\item If so, is there a non-trivial set of channels that satisfy such conditions for a given source and uncoded scheme pair?
\end{itemize}

We shall refer to this kind of channels as ``matched channels''; a dual question is to ask for ``matched sources'', however in the context of the problems considered here, the dual question is notationally more involved, and thus we choose to investigate them from the perspective of ``matched channels''. One can also ask for ``matched distortion measures'', similarly as the approach taken in \cite{Gastpar:03}, however in the Gaussian setting, the MSE distortion is a practically more important and well-motivated case. The set of matched channels should be distinguished from the complete set of channels for which the given uncoded scheme is optimal. The former may be a strict subset of the latter, since these sufficient conditions for optimality in fact depend on the specific outer bounds that can be derived. Characterizing the latter region is naturally more difficult than answering the questions we posed above.

The two questions given above are in essence the two facets of the same question. Since we only provide conditions for matching, or in other words, sufficient conditions for a scheme to be optimal, the set of matched channels may in fact be empty. A trivial condition to answer the first question is simply an impossible one such that we would never be able to certify a channel to be matched. Thus the second question is important, and we show indeed for the two problems considered here, there are non-trivial channels that match the source and the uncoded scheme.

Traditionally, research in information theory asks for computable characterizations of a certain achievable region, for which we first derive an expression for an outer bound, and derive an expression for an inner bound, and then make comparison of them. This approach can be challenging because it usually involves optimization over a set of parameters, and solving such an optimization problem explicitly can be difficult. It is not clear whether the obstacle mainly stems from the intractable nature of the underlying communication problem, or it is mainly caused by the embedded optimization problem.

The aforementioned difficulty motivates the formulation of the first question, which is a decision problem instead of an optimization problem. An analogy of this situation can be found in computer science algorithm research, where instead of asking whether an optimization problem can be solved in polynomial time, an alternative question is asked whether a decision ({\em e.g.}, regarding a solution is above a threshold) can be made in polynomial time. Our problem formulation naturally leads to a different approach in the investigation. Instead of focusing on comparison of the inner bounds and outer bounds using their expressions, we focus on the necessary conditions that the outer bounds become tight, {\em i.e.,} the conditions when the information inequalities hold with strict equality. With fixed source and fixed coding parameters,  the coding vector can be substituted into the conditions, and the necessary and sufficient conditions for such equality can be derived. The outer bounds naturally provide certain ``decoupled'' conditions, which significantly simplify the overall task. Though this approach may have inherently been used by many researchers in the past, its effectiveness becomes particularly evident in our investigation of the joint source channel communication setting.

In the rest of the paper, we focus specifically on two joint source channel coding problems using the approach outlined above. The first problem is to send correlated Gaussian sources on a Gaussian broadcast channel where each receiver is interested in reconstructing only one source component
(or equivalently, one specific linear function of the source) under the MSE distortion measure. The second problem is to send correlated Gaussian sources on a Gaussian multiple-access channel, where each transmitter observes a noisy combination of the source, {\em i.e.,} a case of the vector CEO problem, and the receiver wishes to reconstruct the source components (or equivalently, linear functions of the source components) under the MSE distortion measure. General conditions for matching are derived, which provide new optimality results previously unknown in the literature. These results either include or generalize well-known existing results on the optimality of uncoded schemes in the multiuser setting. 
Particularly notable are the following cases:
\begin{itemize}
\item The first problem generalizes the two-user case considered in \cite{Lapidoth:10BC} and \cite{Tian:11} to the $M$-user case, for which we show that an uncoded scheme is optimal for a large set of sources and channels;  our results reveal that uncoded scheme can still be optimal when some source components are negatively correlated.
\item The results on the second problem includes as special cases the symmetric scalar Gaussian CEO problem \cite{Gastpar:08}, the problem of sending bivariate Gaussian sources on a Gaussian multiple-access channel \cite{Lapidoth:10MAC}, and sending remote (noisy) bivariate Gaussians on a Gaussian multiple-access channel \cite{LapidothWang:11}. Our results reveal that in addition to the symmetric case considered in \cite{Gastpar:08}, uncoded scheme is also optimal when the sensor observation quality is proportional to the channel quality.  These results also allow the sensor observations to have more general correlation structure and the observations to be noisy, thus extending the results in\cite{Lapidoth:10MAC} and \cite{LapidothWang:11}. When viewed from the perspective of computation, our result also provides new insights on the problem of computing linear functions of Gaussian random variables on the Gaussian multiple-access channels considered in \cite{NazerGastpar:07} and \cite{Vishwanath:12}.
\end{itemize}

Although we emphasize in this work the less conventional approach used to obtain the general matching conditions, during the process of this investigation, novel outer bounds are in fact derived for both problems beyond what are available in the literature. These new bounds rely on a technique motivated by a series of our previous works \cite{TianDiggaviShamai:09,Song:12,Khezeli:14(1),Khezeli:14(2)}, the origin of which can be further traced back to Ozarow \cite{Ozarow:80}. 

Notationally, we write for a source $S$ at time $n$ as $S[n]$, and a length-$N$ vector as $S^N$. For a set of quantities $(\alpha_1,\alpha_2,\ldots,\alpha_M)$, we write it in a (column) vector form as $\bar{\alpha}$ when its dimension is clear from the context; however when it is necessary to be more specific, we shall write it as $\alpha_{[1:M]}$. For a real matrix $\Sigma$, we write its transpose as $\Sigma^t$. The positive semidefinite order is denoted as $\succeq$.  

\section{Correlated Gaussian Sources on a Gaussian Broadcast Channel}

In this section we consider the problem of sending correlated Gaussian sources on a Gaussian broadcast channel, which can be described as follows; see also Fig. \ref{fig:BC} for an illustration. Let the zero-mean Gaussian source be $(S_1[n],S_2[n],\ldots,S_M[n])$ with covariance matrix $\Sigma_{S_1,S_2,\ldots,S_M}$ (or simply $\Sigma_{S_{[1:M]}}$), which is assumed to be full rank.
The channel is given by
\begin{align}
Y_m[n]=X[n]+Z_m[n],\quad m=1,2,\ldots,M,\quad n=1,2,\ldots,N,
\end{align}
where $(Z_1,Z_2,\ldots,Z_M)$ are zero-mean additive noises which are mutually independent, with variances $\sigma^2_{Z_1}\geq \sigma^2_{Z_2}\geq \ldots \geq \sigma^2_{Z_M}$, respectively. Both the sources and the channels noises are independent and identically distributed (i.i.d.) over time. The channel input must satisfy an average power constraint $\frac{1}{N}\sum_{n=1}^N\Ex (X[n]^2)\leq P$. The transmitter encodes the length-$N$ source vector $(S^N_1,S^N_2,\ldots,S^N_M)$ into a length-$N$ channel vector $X^N$, and the $m$-th receiver reconstructs from the channel output vector $Y^N_m$ the source vector ${S}^N_m$ as $\hat{S}^N_m$, resulting in a distortion $D_m=\frac{1}{N}\sum_{n=1}^N\Ex(S_m[n]-\hat{S}_m[n])^2$. We omit a formal problem definition using generic encoding and decoding functions here, which is standard and can be obtained by extending that in, for example, \cite{Tian:11}.

The uncoded scheme of interest has the form
\begin{align}
X[n]=\sum_{m=1}^M \alpha_m S_m[n],\quad n=1,2,\ldots,N,
\end{align}
such that
\begin{align}
\Ex(X^2[n])=P,\quad n=1,2,\ldots,N.
\end{align}
In other words, at each time instance, the channel input is simply a linear combination of the source components with the coefficients $(\alpha_1,\alpha_2,\ldots,\alpha_M)$, such that the resulting signal has a variance that is equal to the power constraint $P$.
We shall assume $\alpha_m\neq 0$ for $m=1,2,\ldots,M$. The decoders simply estimate $S_m[n]$ as $\hat{S}_m[n]=\Ex(S_m[n]|Y_m[n])$, at each time instance $n=1,2,\ldots,N$ at decoder $m=1,2,\ldots,M$. 
Notice that the problem can be equivalently formulated as computation of linear functions of the Gaussian sources on the broadcast channel, however this alternative formulation is notationally more involved.

\begin{figure}[tb]
\centering
\includegraphics[width=13cm]{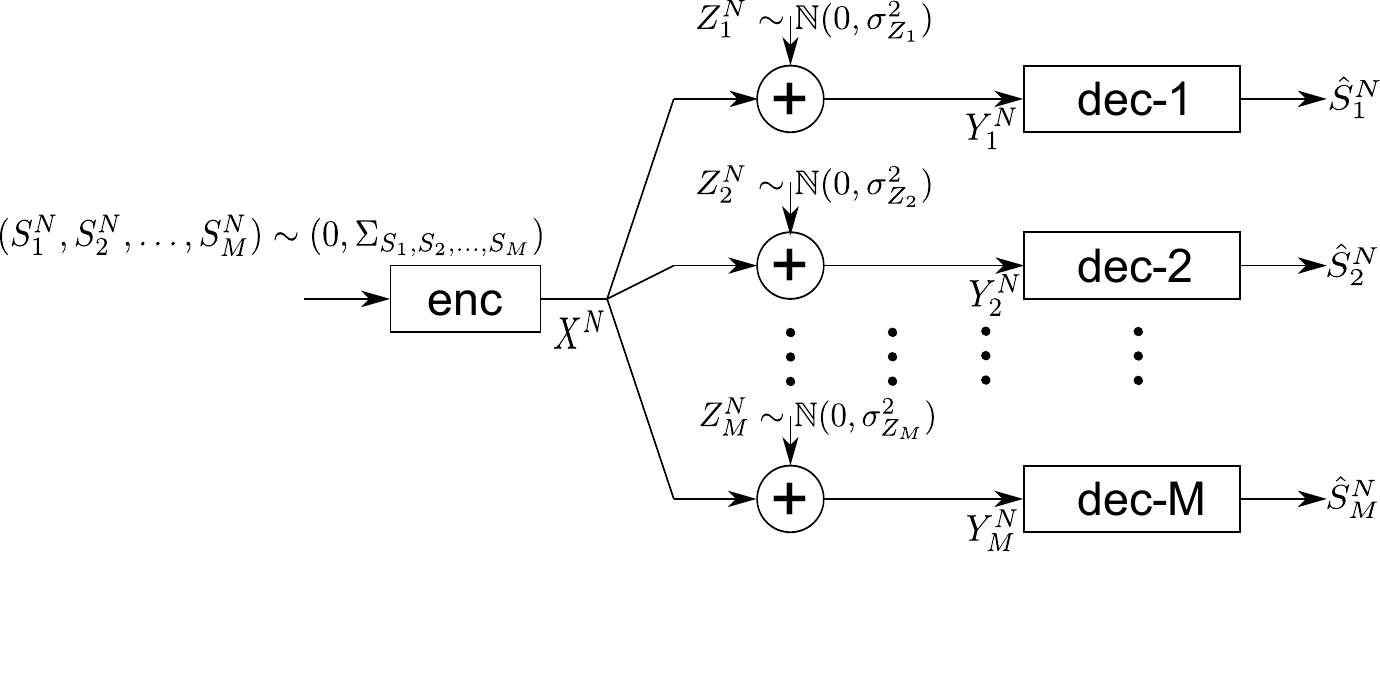}
\vspace{-0.5cm}
\caption{Sending correlated Gaussian sources on a Gaussian broadcast channel.  \label{fig:BC}}
\end{figure}

Define
\begin{align}
\bar{\beta}=(\beta_1,\beta_2,\ldots,\beta_M)^t\triangleq\frac{1}{P}\Sigma_{S_1,S_2,\ldots,S_M}\bar{\alpha}.
\label{eqn:alphatobeta}
\end{align}
The main result on this problem is summarized in the following theorem, which gives a matching condition in a positive semidefinite form.

\begin{theorem}
\label{theorem:BC}
A Gaussian broadcast channel is said to be matched to a given source and the uncoded scheme with non-zero parameters $\bar{\alpha}$, and the distortion vector induced by the given scheme is on the boundary of the achievable distortion region thus optimal, if
\begin{align}
\Sigma^{(0)}\triangleq\Sigma_{V_{[1:M]}} - \Sigma_{S_{[1:M]}}+P\bar{\beta}\bar{\beta}^t
\succeq 0,
\label{eqn:semidefinite}
\end{align}
where the entries of the symmetric  matrix $\Sigma_{V_{[1:M]}}$ are specified as
\begin{align}
\gamma_{j,m}&=-\beta_m\beta_j\frac{P\sigma^2_{Z_m}}{P+\sigma^2_{Z_m}},\quad 1\leq j<m,\quad m=2,3,\ldots,M,\\
\gamma_{m,m}&=\alpha^{-1}_m\left[\beta_m(\sum_{j=1}^{m-1}\alpha_j\beta_j)\frac{P\sigma^2_{Z_m}}{P+\sigma^2_{Z_m}}+\beta_m\sum_{j=m+1}^M\alpha_j\beta_j\frac{P\sigma^2_{Z_j}}{P+\sigma^2_{Z_j}}\right],\quad m=1,2,\ldots,M. \label{eqn:gammamm}
\end{align}
\end{theorem}

This theorem establishes a condition that is sufficient to guarantee a distortion vector induced by the uncoded scheme to be on the boundary of the achievable distortion region, and thus an optimal solution. The matrix $\Sigma_{V_{[1:M]}}$ may seem mysterious at the first sight, however, it will become clear in the proof that it represents the covariance matrix of certain extracted random vectors, whose existence essentially guarantees the optimality of the given uncoded transmission. 

This theorem clearly answers our first question regarding conditions that can be used to certify whether a given uncoded scheme is optimal. In fact, it also provides clues on the second question regarding whether there exist non-trivial channels where such a matching is possible. Indeed, in Section \ref{sec:Cholesky} and Section \ref{sec:BCchannels} we establish several properties of matched channels, through which an answer to the second question is given. Before presenting those results, the proof of this theorem is presented next in two parts: the critical conditions in a novel outer bound are outlined in Section \ref{sec:BCOuter}, and then these conditions for the bound to hold with equality in the uncoded scheme are analyzed in Section \ref{sec:BCinner}. The proof details for the outer bound are relegated to the Appendix.

\subsection{Extracting the Critical Conditions from the Outer Bound}
\label{sec:BCOuter}

In order to obtain the matching condition, we first derive a novel outer bound for this problem. An important technique in the derivation of this outer bound is the introduction of certain appropriate random variables outside of the original problem. This approach is partly motivated by our previous work \cite{TianDiggaviShamai:09,Song:12,Khezeli:14(1),Khezeli:14(2)}, which can further be traced back to Ozarow \cite{Ozarow:80}. Consider $M$ zero-mean Gaussian random variables $(W_1,W_2,\ldots,W_M)$, independent of everything else, with covariance matrix $\Sigma_{W_{[1:M]}}$, and write
\begin{align}
U_m[n]=S_m[n]+W_m[n],\quad m=1,2,\ldots,M,\quad n=1,2,\ldots,N.
\end{align}

The outer bound will be written as a necessary condition that any achievable distortion vector has to satisfy. For this purpose, we bound the following quantity (a summation of entropy powers) for any encoding and decoding functions:
\begin{align}
E(\Sigma_{W_{[1:M]}})\triangleq \sum_{m=1}^M (\sigma^2_{Z_{m}}-\sigma^2_{Z_{m+1}})\exp\left[\frac{2}{N}\sum_{j=1}^mI(U^N_j;Y^N_j|U^N_1,U^N_2,\ldots,U^N_{j-1})\right],
\end{align}
where we have used $\sigma^2_{Z_{M+1}}\triangleq 0$ for notational simplicity. An almost identical quantity was used in \cite{TianDiggaviShamai:09} to obtain an approximate characterization for the distortion region of the Gaussian broadcast problem with bandwidth mismatch. We shall upper-bound this quantity using the channel properties and lower-bound it using the source reconstruction requirements, then combine them to obtain an eventual outer bound.

This quantity can be upper-bounded as given in Appendix \ref{appendix:theorem1} as
\begin{align}
E(\Sigma_{W_{[1:M]}})\leq P+\sigma^2_{Z_1},\label{eqn:BCchannel}
\end{align}
with equality holds if and only if
\begin{align}
h(Y^N_M|S^N_1,S^N_2,\ldots,S^N_M)&=h(Y^N_M|U^N_1,U^N_2,\ldots,U^N_M),\label{eqn:degerate}\\
h(Y^N_1)&=\frac{N}{2}\log 2\pi e( P+\sigma^2_{Z_1}),\label{eqn:BCpower}
\end{align}
and the following condition stemming from the entropy power inequality \cite{CoverThomas} holds with equality
\begin{align}
&\exp\left[\frac{2}{N}h(Y^N_m|U^N_m,U^N_{m-1},\ldots,U^N_1)\right]=\exp\left[\frac{2}{N}h(Y^N_{m+1}|U^N_m,U^N_{m-1},\ldots,U^N_1)\right]+2\pi e[\sigma^2_{Z_m}-\sigma^2_{Z_{m+1}}],\nonumber\\
&\qquad\qquad\qquad\qquad\qquad \qquad\qquad\qquad\qquad\qquad \qquad\qquad\qquad\qquad\qquad m=1,2,\ldots,M.\label{eqn:epi}
\end{align}
The conditions in (\ref{eqn:epi}) are standard, as Bergmans \cite{Bergmans:74} also used the entropy power inequality to establish the Gaussian broadcast channel capacity, and in general a Gaussian codebook suffices to make them equalities. The condition (\ref{eqn:BCpower}) intuitively requires that the power is fully utilized.
The condition (\ref{eqn:degerate}) is however rather peculiar, which essentially requires the noisy source $(U^N_1,U^N_2,\ldots,U^N_M)$ to be as useful as the real source $(S^N_1,S^N_2,\ldots,S^N_M)$ in determining the channel output vector $Y^N_M$.

The quantity $E(\Sigma_{W_{[1:M]}})$ can also be lower-bounded as given in the Appendix, where its individual summands are bounded as
\begin{align}
\exp\left[\frac{2}{N}\sum_{j=1}^mI(U^N_j;Y^N_j|U^N_1,U^N_2,\ldots,U^N_{j-1})\right]\geq \frac{|\Sigma_{S_{[1:m]}}+\Sigma_{W_{[1:m]}}|}{\Pi^m_{j=1}(D_j+\sigma^2_{W_j})},\label{eqn:BCsource}
\end{align}
with equality holds if and only if
\begin{align}
h(U^N_m|Y^N_m)&=\frac{N}{2}\log [2\pi e(D_m+\sigma^2_{W_m} )],\quad m=1,2\ldots,M,\label{eqn:BCcondition1}\\
h(U^N_m|Y^N_m,U^N_1,U^N_2,\ldots,U^N_{m-1})&=h(U^N_m|Y^N_m),\quad m=2,3,\ldots,M.\label{eqn:BCcondition2}
\end{align}
The conditions in (\ref{eqn:BCcondition1}) are standard which can be viewed as requiring the codes to achieve the given distortions with equality, however the conditions in (\ref{eqn:BCcondition2}) are peculiar which essentially require all the information $(Y^N_m,U^N_1,U^N_2,\ldots,U^N_{m-1})$ on $U^N_m$ to be from $Y^N_m$.

Combining (\ref{eqn:BCchannel}) and (\ref{eqn:BCsource}),  we obtain the following result.
\begin{prop}
Any achievable distortion vector $(D_1,D_2,\ldots,D_m)$ must satisfy
\begin{align}
\sum_{m=1}^M (\sigma^2_{Z_{m}}-\sigma^2_{Z_{m+1}})\frac{|\Sigma_{S_{[1:m]}}+\Sigma_{W_{[1:m]}}|}{\Pi^m_{j=1}(D_j+\sigma^2_{W_j})}\leq P+\sigma^2_{Z_1}
\end{align}
for any positive semidefinite $\Sigma_{W_{[1:m]}}$. Moreover, a distortion vector that satisfies (\ref{eqn:degerate}), (\ref{eqn:BCpower}), (\ref{eqn:epi}), (\ref{eqn:BCcondition1}) and (\ref{eqn:BCcondition2}) for some positive semidefinite $\Sigma_{W_{[1:M]}}$ is Pareto-optimal.
\end{prop}

We emphasize that in the approach we shall take, the precise form of this outer bound is less important than the extracted matching conditions (\ref{eqn:degerate}), (\ref{eqn:BCpower}), (\ref{eqn:epi}), (\ref{eqn:BCcondition1}) and (\ref{eqn:BCcondition2}). In fact, the conditions (\ref{eqn:BCpower}), (\ref{eqn:epi}) and (\ref{eqn:BCcondition1}) can be satisfied simply by choosing a jointly Gaussian codebook adjusted linearly to utilize full power, and thus the conditions (\ref{eqn:degerate}) and (\ref{eqn:BCcondition2}) are the only effectual non-trivial conditions. Note that from the problem setting and taking into consideration the fact that physical degradedness is equivalent to stochastic degradedness in the broadcast setting, we have the Markov string
\begin{align}
&Y^N_1\leftrightarrow Y^N_2 \leftrightarrow\ldots\leftrightarrow Y^N_M\leftrightarrow X^N \leftrightarrow (S^N_1,S^N_2,\ldots,S^N_M)\nonumber\\
&\qquad\leftrightarrow (U^N_1,U^N_2,\ldots,U^N_M)\leftrightarrow (U^N_1,U^N_2\leftrightarrow\ldots\leftrightarrow U^N_{M-1})\leftrightarrow\ldots\leftrightarrow U^N_1.\label{eqn:markovstring}
\end{align}
This Markov string is however not sufficient to guarantee (\ref{eqn:degerate}) and (\ref{eqn:BCcondition2}), and thus they require special attention.

\subsection{The Forward Matching Condition}
\label{sec:BCinner}

We first introduce some additional notation and make a few observations. Notice that due to the power constraint, the coefficient vector  $\bar{\alpha}\triangleq (\alpha_1,\alpha_2,\ldots,\alpha_M)^t$ should satisfy
\begin{align}
\bar{\alpha}^t \Sigma_{S_{[1:M]}}\bar{\alpha}=P,
\end{align}
and it follows that
\begin{align}
\sum_{m=1}^M\alpha_m\beta_m=1.
\end{align}

Due to the jointly Gaussian distribution in the uncoded scheme, we can write
\begin{align}
U_m=\beta_m X+ (S_m-\beta_m X)+W_m,\quad m=1,2,\ldots,M,
\end{align}
where the three components are mutually independent, since $\beta_m X=\Ex[S_m|X]$; we have also omitted the time index $[n]$ to simplify the notation. It follows that the covariance matrix of $(U_1,U_2,\ldots,U_M)$ given $Y_m$ can be decomposed as follows
\begin{align}
\Sigma_{U_{[1:M]}|Y_m}=& \sigma^2_{X|Y_m}\bar{\beta}\bar{\beta}^t
+\Sigma_{S_{[1:M]}|X}+\Sigma_{W_{[1:M]}},
\end{align}
where
\begin{align}
\sigma^2_{X|Y_m}=\frac{P\sigma^2_{Z_m}}{P+\sigma^2_{Z_m}},\qquad m=1,2,\ldots,M.
\end{align}
Let $V_m\triangleq(S_m-\beta_m X)+W_m$ for $m=1,2,\ldots,M$, and as a consequence the covariance of the vector $V_{[1:M]}$ is $\Sigma_{S_{[1:M]}|X}+\Sigma_{W_{[1:M]}}$ 

With the above observations, we now return to the derivation of the forward matching condition. As mentioned earlier, we need to substitute the random vectors specified by the uncoded scheme, {\em i.e.,} assigning $X[n]=\sum_{m=1}^M \alpha_m S_m[n]$,  into the critical conditions  (\ref{eqn:degerate}), (\ref{eqn:BCpower}), (\ref{eqn:epi}), (\ref{eqn:BCcondition1}) and (\ref{eqn:BCcondition2}) in order to identify the matching condition. It is straightforward to see that (\ref{eqn:BCpower}), (\ref{eqn:epi}) and (\ref{eqn:BCcondition1}) indeed hold with equality due to the jointly Gaussian distribution of the uncoded scheme, and the chosen coefficients. Thus we only need to focus on (\ref{eqn:degerate}) and (\ref{eqn:BCcondition2}),
which in the context of the uncoded scheme are equivalent to the following conditions in a single-letter form
\begin{align}
h(Z_M)&=h(Y_M|U_1,U_2,\ldots,U_M),\label{eqn:BCasifnonoise}\\
h(U_m|Y_m,U_1,U_2,\ldots,U_{m-1})&=h(U_m|Y_m),\quad m=2,3,\ldots,M.\label{eqn:BCindep}
\end{align}

To satisfy the condition (\ref{eqn:BCindep}) with the jointly Gaussian uncoded scheme,  for any $m=2,3,\ldots,M$, we must have  $\Ex[V_mV_j]+\beta_m\beta_j\sigma^2_{X|Y_m}=0$ for $j=1,2,\ldots,m-1$. This specifies all the off-diagonal terms of $\Sigma_{V_{[1:M]}}$, as
\begin{align}
\Ex[V_mV_j]=\gamma_{m,j}=-\beta_m\beta_j\sigma^2_{X|Y_m},\quad 1\leq j<m,\quad m=2,3,\ldots,M. \label{eqn:crossterms}
\end{align}
It remains to determine the diagonal entries of $\Sigma_{V_{[1:M]}}$.

Notice 
\begin{align}
\sum_{m=1}^M\alpha_m S_m=X=\sum_{m=1}^M\alpha_m\beta_m X
\end{align}
implies that
\begin{align}
\sum_{m=1}^M \alpha_m U_m=X+\sum_{m=1}^M\alpha_m W_m. 
\end{align}
Due to the joint Gaussian distribution and the Markov string $Y_M\leftrightarrow X\leftrightarrow (U_1,U_2,\ldots,U_M)$, in order to satisfy the condition (\ref{eqn:BCasifnonoise}) with equality, we must be able to write $X$ as a linear combination of $(U_1,U_2,\ldots,U_M)$, denoted as $\bar{\alpha}'$.  This implies that
\begin{align}
\sum_{m=1}^M{\alpha}'_m(S_m+W_m)=X=\sum_{m=1}^M\alpha_m S_m,
\end{align}
but this further implies that $\bar{\alpha}'=\bar{\alpha}$, because of the assumption that $\Sigma_{S_{[1:M]}}$ is full rank, and $S_{[1:M]}$ and $W_{[1:M]}$ are independent. It follow that 
\begin{align}
\sum_{m=1}^M \alpha_mV_m=\sum_{m=1}^M \alpha_mW_m=0. \label{eqn:sumzero}
\end{align}
Thus for any $m=1,2,\ldots,M$,
\begin{align}
\sum_{j=1}^M\alpha_j\Ex[V_mV_j]=\Ex[V_m\sum_{j=1}^M\alpha_jV_j]=0.
\end{align}
It follows that $\gamma_{m,m}=\sigma^2_{V_m}$ can be determined from
\begin{align}
\alpha_m\sigma^2_{V_m}=-\sum_{j=1}^{m-1}\alpha_j\Ex[V_mV_j]-\sum_{j=m+1}^M\alpha_j\Ex[V_mV_j]=\beta_m\sigma^2_{X|Y_m}\sum_{j=1}^{m-1}\alpha_j\beta_j+\beta_m\sum_{j=m+1}^M\alpha_j\beta_j\sigma^2_{X|Y_j},
\end{align}
since $\alpha_m\neq 0$.

Thus the conditions (\ref{eqn:BCasifnonoise}) and (\ref{eqn:BCindep}) being equalities uniquely specify the matrix $\Sigma_{V_{[1:M]}}$. Conversely, as long as the matrix $\Sigma^{(0)}$ is positive semidefinite, the conditions  (\ref{eqn:BCasifnonoise}) and (\ref{eqn:BCindep}) hold with equality and the corresponding auxiliary random variables $(W_1,W_2,\ldots,W_M)$ can be found, and the outer bound derived previously is thus tight. This is exactly the matching condition given in Theorem \ref{theorem:BC}.

\textbf{Remark: }The outer bound conditions (\ref{eqn:degerate}) and (\ref{eqn:BCcondition2}) in the context of the uncoded scheme provide two constraints on the matrix $\Sigma_{V_{[1:M]}}$. Their effects on the matrix $\Sigma_{V_{[1:M]}}$ are largely decoupled: the condition required by (\ref{eqn:BCcondition2}) being equal determines the off-diagonal entries of $\Sigma_{V_{1:M]}}$, while the condition (\ref{eqn:degerate}) determines its diagonal entries. This decoupling effect is particularly helpful in deriving the matching condition. In the second problem we consider in the next section, {\em i.e.,} the multiple access channel problem, this decoupling effect is even more pronounced.

\subsection{Cholesky Factorization and a Necessary Condition for Matching}
\label{sec:Cholesky}

The condition given in Theorem \ref{theorem:BC} is in a positive semidefinite form, however, due to the specific problem structure, it can also be represented as a set of recursive conditions, which is discussed in this section. This alternative representation also leads to a necessary condition for matching to hold, which plays an instrumental role for several results given in Section \ref{sec:BCchannels}, where we answer the second question regarding the existence of a non-trivial set of matched channels.

Determining whether a matrix is positive semidefinite is equivalent to computing the LDL decomposition, and checking whether the entries of the resultant diagonal matrix in the decomposition are all non-negative; {\em i.e.}, the matrix $\Sigma^{(0)}$ is positive semidefinite if and only if the diagonal matrix in the LDL decomposition has only non-negative entries. Computationally this can be accomplished with the Cholesky factorization \cite{matrixcomputation} on the matrix $\Sigma^{(0)}$. Here we provide an intuitive description of the Cholesky factorization in the context of the problem being considered, and its conceptual interpretation as the recursive thresholding determination for the channel to yield a matching. Readers more interested in the precise mathematical derivation can skip to the proof of Lemma \ref{lemma:cholesky} directly.

In the first step of the Cholesky factorization, we use symmetric column and row Gaussian elimination to eliminate all the entries of the $M$-th column and the $M$-th row, except the diagonal entry\footnote{Strictly speaking, this yields a decomposition with an upper triangular matrix instead of a lower triangular one.}. Denote the resulting upper-left $(M-1)\times(M-1)$ matrix after this first  step as $\Sigma^{(1)}$. A necessary condition for the matrix $\Sigma^{(0)}$ to be positive definite is that the lower right entry of the matrix $\Sigma^{(0)}$ is strictly positive, or all the entries on the last column are zero. Notice that the condition only involves $\sigma^2_{X|Y_M}$, or equivalently only the channel noise power $\sigma^2_{Z_M}$, which yields a necessary condition on $\sigma^2_{Z_M}$ in the form of $\sigma^2_{Z_M}\geq f^{(0)}(P,\bar{\alpha})$.

Continuing the Cholesky factorization on $\Sigma^{(1)}$,  a similar necessary condition is thus its lower right entry is strictly positive,
%\begin{align}
%\gamma^{(1)}_{M-1,M-1}> 0,
%\end{align}
or the entries on the $(M-1)$-th row of  $\Sigma^{(1)}$ are zero. Similarly as the previous step, the condition on $\sigma^2_{Z_{M-1}}$ is found to be in the form that $\sigma^2_{Z_{M-1}}\geq  f^{(1)}(P,\bar{\alpha},\sigma^2_{Z_M})$.

Continuing this process will yield a set of conditions in the form of
\begin{align}
\sigma^2_{Z_{m}}\geq  f^{(M-m)}(P,\bar{\alpha},\sigma^2_{Z_M},\sigma^2_{Z_M},\sigma^2_{Z_{M-1}},\ldots,\sigma^2_{Z_{m+1}}),\quad m=M,M-1,\ldots,1.
\end{align}
The matrix $\Sigma^{(0)}$ is positive semidefinite if and only if all such threshold conditions are satisfied.

Notice that the threshold function  $f^{(M-m)}(P,\bar{\alpha},\sigma^2_{Z_M},\sigma^2_{Z_M},\sigma^2_{Z_{m-1}})$ for $\sigma^2_{Z_{m}}$ depends on the channel noise power values $(\sigma^2_{Z_M},\sigma^2_{Z_{M-1}},\ldots,\sigma^2_{Z_{m+1}})$, but not on $(\sigma^2_{Z_{1}},\sigma^2_{Z_{2}},\ldots,\sigma^2_{Z_{m-1}})$. Thus these functions $f^{(m)}(\cdot)$, $m=M,M-1,\ldots,1$ can be viewed as a recursive threshold checking (or determination) procedure, and the channel noise power $\sigma^2_{Z_m}$ needs to be chosen to be larger than the threshold determined by $(\sigma^2_{Z_M},\sigma^2_{Z_{M-1}},\ldots,\sigma^2_{Z_{m+1}})$ in every step to yield a matching. Given the above observation, it is natural to speculate that if a channel is matched, then any more noisy channel also induces a match. This intuition is in fact correct, and the statement is made more rigorous in the next section as Corollary \ref{coro1}. This behavior is reminiscent of the optimality of broadcasting a single Gaussian source on a bandwidth-matched Gaussian channel, and has also been previously observed for broadcast bivariate Gaussian sources \cite{Lapidoth:10BC}.

%One necessary condition for a matching to exist is that the matrix $\Sigma_{V_1,V_2,\ldots, V_M}$ is positive semidefinite. 

We can thus apply the Cholesky factorization technique on the matrix $\Sigma_{V_{[1:M]}}$ to obtain a necessary condition for matching to exist.

\begin{lemma}
\label{lemma:cholesky}
For the matrix  $\Sigma_{V_{[1:M]}}$ constructed previously to be positive semidefinite (with $\sigma^2_{Z_M}>0$), it must be true that $\alpha_i\beta_i\geq 0$, $i=1,2,\ldots,M$.
\end{lemma}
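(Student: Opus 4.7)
My plan is to prove the lemma by induction on $M$, using the Cholesky factorization (repeated Schur complement) from the bottom-right corner, exactly as outlined at the start of this subsection. Throughout, I will write $c_j:=\alpha_j\beta_j$; the power constraint gives $\sum_{j=1}^{M} c_j = 1$. The base case $M=1$ is immediate since $c_1 = \alpha_1^2\sigma^2_{S_1}/P = 1$.

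For the inductive step, I would first examine the bottom-right entry. Using $\sum_{j<M}c_j = 1-c_M$, the formula for $\gamma_{M,M}$ simplifies to $\gamma_{M,M} = \frac{c_M(1-c_M)}{\alpha_M^2}\cdot\frac{P\sigma^2_{Z_M}}{P+\sigma^2_{Z_M}}$; since $\sigma^2_{Z_M}>0$, the PSD requirement $\gamma_{M,M}\ge 0$ forces $c_M(1-c_M)\ge 0$, hence $c_M\in[0,1]$, and in particular $c_M\ge 0$. I would then split into three sub-cases. If $c_M = 0$, then $\beta_M = 0$, the last row and column of $\Sigma_V$ vanish, and the upper-left $(M-1)\times(M-1)$ principal submatrix is PSD and coincides with the $\Sigma_V$ matrix of the $(M-1)$-dimensional instance of the problem (with $\sum_{j<M}c_j = 1$), so the induction hypothesis yields $c_j \ge 0$ for $j<M$. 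If $c_M = 1$, then $\gamma_{M,M}=0$ while $\beta_M \ne 0$, and PSD forces $\gamma_{j,M}\propto\beta_j\beta_M = 0$, whence $\beta_j = 0$ and $c_j = 0$ for $j<M$. If $0 < c_M < 1$, then $\gamma_{M,M}>0$ and PSD of $\Sigma_V$ is equivalent to PSD of the Schur complement $\Sigma^{(1)}$; a direct calculation would show $\Sigma^{(1)} = (1-c_M)\,\Sigma_{V'}$, where $\Sigma_{V'}$ is the $\Sigma_V$ of an $(M-1)$-dim instance with rescaled parameters $\alpha'_j := \alpha_j/\sqrt{1-c_M}$, $\beta'_j := \beta_j/\sqrt{1-c_M}$ (so $c'_j := c_j/(1-c_M)$ sum to $1$) and with each $\sigma^2_{X|Y_m}$ replaced by $\hat\sigma_m^2 := \sigma^2_{X|Y_m} + \frac{c_M}{1-c_M}\sigma^2_{X|Y_M}$, which remains positive and preserves the ordering $\hat\sigma_1^2\ge\cdots\ge\hat\sigma_{M-1}^2$. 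Since $1-c_M>0$, PSD of $\Sigma_{V'}$ follows, and the induction hypothesis yields $c'_j \ge 0$, whence $c_j\ge 0$ for $j<M$.

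I expect the main obstacle to be the algebraic verification in the third case: that the diagonal entries of $\Sigma^{(1)}$ reassemble into the $(M-1)$-dim template with the upgraded $\hat\sigma_m^2$'s. The key identity to grind out is that the stray $c_M\sigma^2_{X|Y_M}$ term buried in $\gamma_{m,m}$ (from the $j=M$ summand of $\sum_{j>m}c_j\sigma^2_{X|Y_j}$) combines with the Schur correction $-\gamma_{m,M}^2/\gamma_{M,M} = -\beta_m^2 c_M\sigma^2_{X|Y_M}/(1-c_M)$ and, via $\alpha_m\beta_m = c_m$ and $\sum_{j\neq m, j<M}c_j = 1-c_m-c_M$, cancels exactly to produce the $\frac{c_M}{1-c_M}\sigma^2_{X|Y_M}$-shift that promotes every occurrence of $\sigma^2_{X|Y_m}$ to $\hat\sigma_m^2$. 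Once this bookkeeping is verified, the induction closes cleanly and the claim $\alpha_i\beta_i\ge 0$ follows for every $i$.
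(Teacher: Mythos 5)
Your proposal is correct and follows essentially the same route as the paper's proof: both perform the Cholesky/Schur-complement reduction from the bottom-right corner, extract the sign constraints $\alpha_M\beta_M\geq 0$ and $\sum_{j<M}\alpha_j\beta_j\geq 0$ from the pivot $\gamma_{M,M}=\frac{\alpha_M\beta_M(1-\alpha_M\beta_M)}{\alpha_M^2}\sigma^2_{X|Y_M}$, apply the identical update $B_j\mapsto B_j+\frac{\alpha_M\beta_M}{\sum_{i<M}\alpha_i\beta_i}B_M$ to the conditional variances, and handle the zero-pivot degeneracies ($c_M=0$ and $c_M=1$) as separate cases. Your repackaging of each Schur complement as a rescaled $(M-1)$-dimensional instance via $\alpha_j'=\alpha_j/\sqrt{1-c_M}$ is a cosmetically cleaner way of turning the paper's induction on the reduction step into an induction on dimension, and the key algebraic identity you flag for the diagonal entries does check out.
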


Note that this condition is essentially independent of the channel, as long as the channel is not perfect. This lemma is proved in Appendix \ref{appendix:lemma1}.

\subsection{Properties and Existence of Matched Channels}
\label{sec:BCchannels}

With Lemma 1, we can establish several properties of the set of matched channels, given next as corollaries to Theorem \ref{theorem:BC}. Their proofs are provided in Appendix \ref{appendix:coro1}-\ref{appendix:coro3}. These properties  essentially provide an answer to the second question posed earlier, and we shall further illustrate such sources and channels using an example.

\begin{corollary}
\label{coro1}
If the uncoded scheme is matched on a broadcast channel with noise powers given as $(\sigma^2_{Z_1},\sigma^2_{Z_2},\ldots,\sigma^2_{Z_M})$, then it is matched and thus optimal on any channel with noise powers $\sigma^2_{Z^+_1}\geq \sigma^2_{Z^+_2}\geq \ldots\geq \sigma^2_{Z^+_M}$ where $\sigma^2_{Z^+_m}\geq \sigma^2_{Z_m}$, $m=1,2,\ldots,M$.
\end{corollary}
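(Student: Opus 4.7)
The plan is to view the matching condition $\Sigma^{(0)}\succeq 0$ as a condition on the vector $\bar{t}=(t_1,\ldots,t_M)$ where $t_m:=P\sigma^2_{Z_m}/(P+\sigma^2_{Z_m})$; since each $t_m$ is strictly increasing in $\sigma^2_{Z_m}$, the corollary reduces to the one-coordinate monotonicity statement that raising only $t_{m^*}$ (any $m^*$) preserves $\Sigma^{(0)}\succeq 0$. Given this, the general claim follows by iterating over coordinates, because Lemma~1 depends only on $\bar{\alpha},\bar{\beta}$ and continues to apply at every intermediate channel: indeed, $\Sigma^{(0)}\succeq 0$ together with the identity $\Sigma_{V_1,\ldots,V_M}=\Sigma^{(0)}+(\Sigma_{S_1,\ldots,S_M}-P\bar{\beta}\bar{\beta}^t)$, whose second summand is the covariance of $(S_m-\beta_m X)_{m=1}^M$, forces $\Sigma_{V_1,\ldots,V_M}\succeq 0$, which is the hypothesis of Lemma~1.

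Inspection of the formulas in Theorem~\ref{theorem:BC} shows that $t_{m^*}$ enters $\Sigma^{(0)}$ only in (i) the off-diagonal entries $\gamma_{j,m^*}$ for $j<m^*$; (ii) the diagonal $\gamma_{m^*,m^*}$ via its first sum; and (iii) the diagonals $\gamma_{j,j}$ for $j<m^*$ via the $k=m^*$ contribution in its second sum. Consequently, when $t_{m^*}$ is raised by $\Delta t\geq 0$ the increment $\Delta\Sigma$ is supported on the leading principal block indexed by $\{1,\ldots,m^*\}$, with entries linear in $\Delta t$ that can be read off directly from the theorem.

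To verify $\Delta\Sigma\succeq 0$, set $c:=\alpha_{m^*}\beta_{m^*}$ and $s:=\sum_{j<m^*}\alpha_j\beta_j$, both nonnegative by Lemma~1. If $c=0$ or $s=0$ then every nonzero entry of $\Delta\Sigma$ vanishes and the claim is trivial. Otherwise, for a test vector $x=(y_1,\ldots,y_M)^t$ and $z:=\sum_{j<m^*}\beta_j y_j$,
\begin{equation*}
\frac{x^t\Delta\Sigma\,x}{\Delta t}=c\sum_{j<m^*}\frac{\beta_j}{\alpha_j}\,y_j^2 + 2\beta_{m^*}\,z\,y_{m^*} + \frac{\beta_{m^*}\,s}{\alpha_{m^*}}\,y_{m^*}^2.
\end{equation*}
Since $\beta_j/\alpha_j=\beta_j^2/(\alpha_j\beta_j)\geq 0$ by Lemma~1, Cauchy--Schwarz (with weights $\mathrm{sign}(\beta_j)\sqrt{\alpha_j\beta_j}$ and $\sqrt{\beta_j/\alpha_j}\,y_j$) gives $z^2\leq s\sum_{j<m^*}(\beta_j/\alpha_j)\,y_j^2$, reducing the task to positive semidefiniteness of the $2\times 2$ quadratic form in $(z,y_{m^*})$ whose matrix has diagonal entries $c/s$ and $\beta_{m^*}s/\alpha_{m^*}$ (both nonnegative) and off-diagonal entry $\beta_{m^*}$; its determinant $c\beta_{m^*}/\alpha_{m^*}-\beta_{m^*}^2$ vanishes by $c/\alpha_{m^*}=\beta_{m^*}$, so the form is PSD of rank one, and $\Delta\Sigma\succeq 0$ follows.

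The main obstacle I anticipate is disentangling the asymmetric dependence of the entries of $\Sigma^{(0)}$ on the various $t_\ell$ and identifying that a perturbation in a single coordinate $t_{m^*}$ touches only the leading principal block with exactly the linear form above; once this bookkeeping is pinned down, Lemma~1 supplies precisely the nonnegativity conditions needed for the Cauchy--Schwarz bound to apply and for the final $2\times 2$ determinant to vanish, yielding the required rank-one PSD increment.
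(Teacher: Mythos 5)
Your proposal is correct and takes essentially the same route as the paper's proof: raise one noise power at a time, note that the resulting increment of $\Sigma^{(0)}$ is supported on the rows and columns indexed by $\{1,\ldots,m^*\}$ with exactly the entries you list, verify that this increment is positive semidefinite using $\alpha_i\beta_i\ge 0$ from Lemma 1 (which applies because $\Sigma_{V_1,\ldots,V_M}$ is forced to be PSD, as you note), and conclude that the new $\Sigma^{(0)}$ is a sum of two PSD matrices. The only difference is in how the increment's PSD-ness is certified --- the paper uses one step of symmetric (Cholesky) elimination where you use a Cauchy--Schwarz reduction to a singular $2\times 2$ form --- and the sign of the cross term is immaterial since it can be absorbed by negating $y_{m^*}$.
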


The corollary reveals a property of matched channels: once a channel is matched, any channel with more noise is also a matched channel and thus the uncoded scheme is optimal. The next corollary states, from the perspective of only the source and the uncoded scheme parameters, a necessary and sufficient condition for matching to exist.

\begin{corollary}
\label{coro2}
Matching (on some broadcast channels with finite noise powers) exists, if and only if $\alpha_i\beta_i>0$ and the matrix $\Pi\Sigma_{S_{[1:M]}}\Pi$ has its largest eigenvalue being 1 with multiplicity 1, where $\Pi$ is a diagonal matrix with diagonal entries being
\begin{align}
\left(\frac{\alpha_1}{\sqrt{\alpha_1\sum_{i=1}^M\rho_{1,i}\alpha_i}},\frac{\alpha_2}{\sqrt{\alpha_2\sum_{i=1}^M\rho_{2,i}\alpha_i}},\ldots,\frac{\alpha_M}{\sqrt{\alpha_M\sum_{i=1}^M\rho_{M,i}\alpha_i}}\right),
\end{align}
where $\rho_{i,j}$ is used to denote the entries of $\Sigma_{S_{[1:M]}}$. Moreover, if the above condition holds, then any channel with $\sigma^2_{Z_1}\geq \sigma^2_{Z_2}\ldots \sigma^2_{Z_M}\geq \sigma^2_Z$ is a matched channel, where  $\sigma^2_Z=\frac{\lambda_2P}{1-\lambda_2}$, and $\lambda_2$ is the second largest eigenvalue of the matrix $\Pi\Sigma_{S_{[1:M]}}\Pi$.
\end{corollary}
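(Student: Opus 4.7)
The plan is to reduce the matching question to the symmetric case $\sigma^2_{Z_1}=\cdots=\sigma^2_{Z_M}=\sigma^2$ and then diagonalise the PSD condition by a congruence with $\Pi$. For the necessity direction, if some ordered channel is matched, Corollary \ref{coro1} applied with $\sigma^2_{Z^+_m}=\max_j\sigma^2_{Z_j}$ for every $m$ shows that the corresponding symmetric channel is also matched; for the ``moreover'' direction, once the threshold $\sigma^2_Z$ is identified in the symmetric problem, a second application of Corollary \ref{coro1} extends matching to every ordered channel with $\sigma^2_{Z_M}\geq\sigma^2_Z$. So it suffices to analyse the symmetric case.

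Set $c=P\sigma^2/(P+\sigma^2)\in[0,P)$. Substituting $\sigma^2_{X|Y_m}=c$ into the entrywise formulas for $\Sigma_{V_1,\ldots,V_M}$ and using $\sum_m\alpha_m\beta_m=1$, I would verify the clean rank-one form
\begin{align*}
\Sigma_{V_1,\ldots,V_M}=c\,\mathrm{diag}(\beta_m/\alpha_m)-c\bar{\beta}\bar{\beta}^t,
\end{align*}
so that $\Sigma^{(0)}=c\,\mathrm{diag}(\beta_m/\alpha_m)+(P-c)\bar{\beta}\bar{\beta}^t-\Sigma_{S_1,\ldots,S_M}$. Now introduce $\bar{u}\triangleq\Pi^{-1}\bar{\alpha}$, whose $m$-th entry equals $\sqrt{P\alpha_m\beta_m}$ and is real precisely when $\alpha_m\beta_m>0$ (slightly strengthening the inequality in Lemma 1). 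Direct computation then yields $\Pi\,\mathrm{diag}(\beta_m/\alpha_m)\,\Pi=I/P$, $\Pi\bar{\beta}=\bar{u}/P$, $\|\bar{u}\|^2=P$, and, using $\Sigma_{S_1,\ldots,S_M}\bar{\alpha}=P\bar{\beta}$, that $\bar{u}$ is an eigenvector of $\Pi\Sigma_{S_1,\ldots,S_M}\Pi$ with eigenvalue $1$.

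Assembling these identities, the congruence reduces to
\begin{align*}
\Pi\Sigma^{(0)}\Pi=\frac{c}{P}\left(I-\frac{\bar{u}\bar{u}^t}{P}\right)-\left(\Pi\Sigma_{S_1,\ldots,S_M}\Pi-\frac{\bar{u}\bar{u}^t}{P}\right).
\end{align*}
The first summand is $c/P$ times the orthogonal projector onto $\bar{u}^\perp$, while the second is $\Pi\Sigma_{S_1,\ldots,S_M}\Pi$ with its eigenvalue-$1$ rank-one component along $\bar{u}$ removed; both annihilate $\bar{u}$. Hence $\Sigma^{(0)}\succeq 0$ iff the spectrum of $\Pi\Sigma_{S_1,\ldots,S_M}\Pi$ on $\bar{u}^\perp$ is bounded by $c/P$, which is possible for some finite $c<P$ iff every eigenvalue of $\Pi\Sigma_{S_1,\ldots,S_M}\Pi$ other than the $\bar{u}$-eigenvalue is strictly less than $1$, i.e. iff $1$ is the largest eigenvalue with multiplicity one. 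The smallest admissible $c$ is then $\lambda_2 P$, and inverting $c=P\sigma^2/(P+\sigma^2)$ yields $\sigma^2_Z=\lambda_2 P/(1-\lambda_2)$.

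The main bookkeeping hurdle is to distil the clean rank-one form of $\Sigma_{V_1,\ldots,V_M}$ in the symmetric case from the entrywise formulas in Theorem \ref{theorem:BC} (the diagonal cancellations rely on $\sum_m\alpha_m\beta_m=1$), and then to recognise $\bar{u}$ as the distinguished eigenvector of $\Pi\Sigma_{S_1,\ldots,S_M}\Pi$ via $\Sigma_{S_1,\ldots,S_M}\bar{\alpha}=P\bar{\beta}$; once these two ingredients are in place, the spectral criterion and the closed-form threshold drop out immediately.
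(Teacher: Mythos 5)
Your proof follows essentially the same route as the paper's: reduce to the equal-noise channel via Corollary \ref{coro1}, conjugate the PSD condition by $\Pi$, and read off the spectral criterion from the distinguished eigenvector $(\sqrt{\alpha_1\beta_1},\ldots,\sqrt{\alpha_M\beta_M})^t$ at eigenvalue $1$; your rank-one form of $\Sigma_{V_1,\ldots,V_M}$ in the symmetric case and the congruence identity for $\Pi\Sigma^{(0)}\Pi$ both check out, and they yield the threshold $\sigma^2_Z=\lambda_2P/(1-\lambda_2)$ exactly as claimed. The one step you leave unproven is the necessity of $\alpha_i\beta_i>0$: you assume it in order to make $\Pi$ and $\bar{u}$ real, but it is part of the corollary's conclusion, so in the ``only if'' direction you must first derive it from matching --- as the paper does, by observing that if some $\beta_i=0$ then the $i$-th row of $\Sigma_{V_1,\ldots,V_M}+P\bar{\beta}\bar{\beta}^t$ vanishes, so this matrix cannot dominate the full-rank $\Sigma_{S_1,\ldots,S_M}$; hence $\beta_i\neq 0$, and Lemma 1's $\alpha_i\beta_i\geq 0$ upgrades to the required strict inequality.
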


%\textbf{Remark:} If all the off-diagonal terms of $\Sigma_{S_1,S_2,\ldots,S_M}$ are non-positive and $\alpha_i$, $i=1,2,\ldots,M$ all have the same sign, then there exists no channel for any such uncoded scheme to be matched. This is because  the matrix $\mbox{diag}(\bar{\alpha})\Sigma_{W_1,W_2,\ldots,W_M}\mbox{diag}(\bar{\alpha})$, which is required in the proof to be positive semidefinite, in this case is negative semidefinite. Thus unless the matrix $\Sigma_{W_1,W_2,\ldots,W_M}$ is all zero

\textbf{Remark:} It should be noted that the condition in the first part of the corollary is the most general condition that can be derived using Theorem \ref{theorem:BC}, but it does necessarily capture all the cases that an analog scheme can be optimal, which stems from the fact that the outer bound we derived may not be tight.

\textbf{Remark:} If the entries of $\mbox{diag}(\bar{\alpha})\Sigma_{S_{[1:M]}}\mbox{diag}(\bar{\alpha})$ are strictly positive, then matching is always possible.
This  follows from the fact that the matrix $\Pi\Sigma_{S_{[1:M]}}\Pi$ has positive entries, and $\bar{v}^t_1=(\sqrt{\alpha_1\beta_1}, \sqrt{\alpha_2\beta_2}, ...,   \sqrt{\alpha_M\beta_M})$ is its positive eigenvector, such that $1$ is its largest eigenvalue with multiplicity 1 (by Perron-Frobenius Theorem \cite{matrixanalysis}).

Different from the case discussed in the previous remark, the next corollary gives another sufficient condition for matching to occur when the sources and the coding parameters satisfy the same positive correlation condition.

\begin{corollary}
\label{coro3}
Let the entries of the matrix $\mbox{diag}(\bar{\alpha})\Sigma_{S_{[1:M]}}\mbox{diag}(\bar{\alpha})$ be strictly positive. Define
\begin{align}
\sigma^2_{Z^*_m}\triangleq \max_{j<m} \frac{\beta_j\beta_m}{\rho_{j,m}}P^2-P, \quad m=2,3,\ldots,M.
\end{align}
Any channel with $\sigma^2_{Z_1}\geq \sigma^2_{Z_2}\geq \ldots\geq \sigma^2_{Z_M}$ such that $\sigma^2_{Z_m}\geq \sigma^2_{Z^*_m}$ for $m=2,3,\ldots,M$ is a matched channel.
\end{corollary}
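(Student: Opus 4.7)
The plan is to verify the positive-semidefinite criterion $\Sigma^{(0)}\succeq 0$ from Theorem \ref{theorem:BC} by exhibiting a weighted-Laplacian structure in $\Sigma^{(0)}$. The two claims to establish are (i) $\Sigma^{(0)}\bar{\alpha}=0$ and (ii) $\alpha_j\alpha_m\Sigma^{(0)}_{j,m}\leq 0$ for all $j\neq m$. Together with $D=\mbox{diag}(\bar{\alpha})$, these say that the symmetric matrix $D\Sigma^{(0)}D$ has non-positive off-diagonal entries and zero row sums, so its diagonal is forced to equal the sum of absolute values of the off-diagonal entries in the same row, making it weakly diagonally dominant with non-negative diagonal and hence positive semidefinite by Gershgorin's theorem. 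Since $D$ is invertible, $\Sigma^{(0)}\succeq 0$ follows.

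For step (ii), the entries $\Ex[V_jV_m]=-\beta_j\beta_m\sigma^2_{X|Y_{\max(j,m)}}$ computed in Section \ref{sec:BCinner} (cf. the substitution following (\ref{eqn:BCindep})) yield
\begin{align*}
\Sigma^{(0)}_{j,m}=\frac{P^2\beta_j\beta_m}{P+\sigma^2_{Z_{\max(j,m)}}}-\rho_{j,m},\qquad j\neq m.
\end{align*}
The hypothesis $\alpha_i\alpha_j\rho_{i,j}>0$ implies $\alpha_iP\beta_i=\sum_j\alpha_i\alpha_j\rho_{i,j}>0$, so $\mbox{sgn}(\beta_i)=\mbox{sgn}(\alpha_i)$ componentwise and $\mbox{sgn}(\beta_j\beta_m)=\mbox{sgn}(\alpha_j\alpha_m)=\mbox{sgn}(\rho_{j,m})$; in particular the quantity $\beta_j\beta_m P^2/\rho_{j,m}-P$ appearing in $\sigma^2_{Z_m^*}$ is unambiguously defined. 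It is convenient to perform the sign change $(S_m,\alpha_m)\mapsto (s_mS_m,s_m\alpha_m)$ with $s_m=\mbox{sgn}(\alpha_m)$; this preserves the channel input $\sum_m\alpha_mS_m$, each individual MSE distortion, and therefore the entire matching problem, while making $\tilde\alpha_m,\tilde\beta_m,\tilde\rho_{j,m}$ all strictly positive. In these coordinates the threshold hypothesis $\sigma^2_{Z_m}\geq \tilde\beta_j\tilde\beta_m P^2/\tilde\rho_{j,m}-P$ for every $j<m$ rearranges to $\tilde\rho_{j,m}(P+\sigma^2_{Z_m})\geq \tilde\beta_j\tilde\beta_m P^2$, which gives $\tilde\Sigma^{(0)}_{j,m}\leq 0$ directly; by symmetry this covers all $j\neq m$, and it translates back to $\alpha_j\alpha_m\Sigma^{(0)}_{j,m}\leq 0$ in the original coordinates.

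For step (i), the identities $\Sigma_{S_1,\ldots,S_M}\bar{\alpha}=P\bar{\beta}$ and $\bar{\beta}^t\bar{\alpha}=1$ are immediate from (\ref{eqn:alphatobeta}) and the power constraint $\bar{\alpha}^t\Sigma_{S_1,\ldots,S_M}\bar{\alpha}=P$. The identity $\Sigma_{V_1,\ldots,V_M}\bar{\alpha}=0$ holds by construction in Section \ref{sec:BCinner}: the diagonal entry $\gamma_{m,m}$ was defined precisely so that $\sum_j\alpha_j\Ex[V_mV_j]=0$ for every $m$, which is equivalent to the pointwise constraint $\sum_m\alpha_mV_m=0$ from (\ref{eqn:sumzero}). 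Combining yields
\begin{align*}
\Sigma^{(0)}\bar{\alpha}=\Sigma_{V_1,\ldots,V_M}\bar{\alpha}-\Sigma_{S_1,\ldots,S_M}\bar{\alpha}+P\bar{\beta}(\bar{\beta}^t\bar{\alpha})=0-P\bar{\beta}+P\bar{\beta}=0,
\end{align*}
completing step (i). The main obstacle is the sign bookkeeping in step (ii), since the hypothesis only controls the sign of $\alpha_i\alpha_j\rho_{i,j}$ rather than of its individual factors; the proposed reduction to all-positive coordinates neutralizes this obstacle, after which the argument collapses to the single rearrangement displayed above.
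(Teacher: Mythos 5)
Your proof is correct and follows essentially the same route as the paper's: both establish positive semidefiniteness of $\mbox{diag}(\bar{\alpha})\,\Sigma^{(0)}\,\mbox{diag}(\bar{\alpha})$ (which the paper writes as $\mbox{diag}(\bar{\alpha})\Sigma_{W_1,\ldots,W_M}\mbox{diag}(\bar{\alpha})$, the same matrix) by combining zero row sums with non-positive off-diagonal entries under the stated noise thresholds, and then invoking weak diagonal dominance. Your reading of the hypothesis as strict positivity of the entries of $\mbox{diag}(\bar{\alpha})\Sigma_{S_1,\ldots,S_M}\mbox{diag}(\bar{\alpha})$ matches the paper's evident intent, and your explicit sign-normalization is a harmless presentational variant of the paper's bookkeeping with the $\alpha_i$ factors.
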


\textbf{Remark:} $\sigma^2_{Z^*_m}$ as defined above may in fact be negative for some $m$. However this does not cause any discrepancy, because of the existence of the additional requirement $\sigma^2_{Z_1}\geq \sigma^2_{Z_2}\geq \ldots\geq \sigma^2_{Z_M}$. As a sanity check, notice that
\begin{align}
\sum_{i=1}^M(\alpha_i)(\beta_i\beta_m P)=P\beta_m=\sum_{i=1}^M(\alpha_i)(\rho_{i,m}),
\end{align}
but $\beta_M^2P<\rho_{M,M}=\sigma^2_{S_M}$ unless $\alpha_1=\alpha_2=\ldots=\alpha_{M-1}=0$, which however would contradict our assumption.
It thus follows
\begin{align}
\max_{j<M} \frac{\beta_j\beta_mP}{\rho_{j,M}}> 1,
\end{align}
and thus $\sigma^2_{Z^*_M}>0$ always holds under the condition in the corollary.

\textbf{Remark:} For the symmetric case where $\sigma^2_{S_i}=\sigma^2_S$, $\sigma^2_{S_iS_j}=\rho\sigma^2_S$, $\alpha_i=\alpha$ and $\Ex[S_i|X]=\beta X$, for $i=1,2,\ldots,M$. A necessary and sufficient condition for matching is simply
\begin{align}
\frac{\sigma^2_{Z_M}}{P+\sigma^2_{Z_M}}\geq \frac{1-\rho}{1+(M-1)\rho}.
\label{eqn:symmetric}
\end{align}

To see this, notice that
\begin{align}
P=\alpha^2M\sigma^2_S[1+(M-1)\rho].
\end{align}
and $\beta$ can be computed as
\begin{align}
\beta =\frac{1}{\alpha M}.
\end{align}
Checking the first condition in the Cholesky factorization, it is easily verified that (\ref{eqn:symmetric}) is a necessary condition for matching. However, from Corollary \ref{coro3}, it is seen that it is sufficient to choose any $\sigma^2_{Z_m}\geq \sigma^2_{Z^*_m}$, where
$\sigma^2_{Z^*_m}=\frac{\beta^2}{\rho \sigma^2_{S}}P^2-P$, $m=2,3,\ldots,M$. This is exactly condition (\ref{eqn:symmetric}).

\subsection{An Example:  A Source with Three Components}

\begin{figure}[tb]
\centering
\includegraphics[width=8cm]{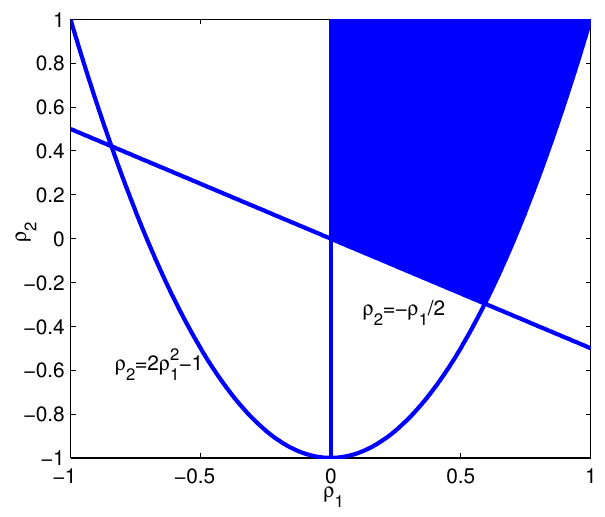}
\caption{Determining the $(\rho_1,\rho_2)$ pairs for which matching is possible, given in shade.  \label{fig:rho1rho2}}
\end{figure}

Let us consider a source with three components whose covariance matrix is either
\begin{align}
\Sigma_{S_1,S_2,S_3}=\begin{pmatrix}
1& \rho_{1}&\rho_{2}\\
\rho_{1}&1&\rho_{1}\\
\rho_{2}&\rho_{1}&1
\end{pmatrix},\label{eqn:cov1}
\end{align}
or
\begin{align}
\Sigma_{S_1,S_2,S_3}=\begin{pmatrix}
1& \rho_{2}&\rho_{1}\\
\rho_{2}&1&\rho_{1}\\
\rho_{1}&\rho_{1}&1
\end{pmatrix},\label{eqn:cov2}
\end{align}
and further assume that the coefficients are chosen as $\alpha_1=\alpha_2=\alpha_3=1$ in the uncoded scheme.
In addition to the constraint that the matrix $\Sigma_{S_1,S_2,S_3}$ must be positive definite, for a matching to exist, the condition in Corollary \ref{coro2} must be satisfied.
 It can be shown that the eigenvalues of $\Pi\Sigma_{S_1,S_2,S_3}\Pi$ are
\begin{align}
\lambda_1=1,\quad \lambda_2=\frac{-2\rho_1^2 + \rho_2 + 1}{2\rho_1^2+2\rho_1\rho_2+3\rho_1 + \rho_2  + 1},\quad \lambda_3=\frac{1-\rho_2}{\rho_1 + \rho_2 + 1},
\end{align}
and we must have $\lambda_2<1$ and $\lambda_3<1$. In Appendix \ref{appendix:source2}, we show that the valid choices are the $(\rho_1,\rho_2)$ pairs such that
\begin{align}
\rho_2< 1,\quad 0< \rho_1< 1,\quad \rho_1+2\rho_2>0,\quad \rho_2> 2\rho_1^2-1. \label{eqn:source2}
\end{align}
The corresponding region is plotted in Fig \ref{fig:rho1rho2}. Notice that the two matrices are equivalent for the purpose of determining whether matching is possible, thus the region in Fig \ref{fig:rho1rho2} is valid for both cases.

Next let us fix a $(\rho_1,\rho_2)$ pair, and consider the region of $(\frac{P\sigma^2_{Z_2}}{P+\sigma^2_{Z_2}},\frac{P\sigma^2_{Z_3}}{P+\sigma^2_{Z_3}})$ pairs such that matching occurs. The tradeoffs can be computed explicitly, and are illustrated in Fig. \ref{fig:N1N2} for $(\rho_1,\rho_2)=(\frac{1}{2},\frac{1}{6})$. The circles in the plots give the channels specified by Corollary \ref{coro2}. The channels given by Corollary \ref{coro3} can be computed directly (given as the dots), which is loose in the first case, but on the lower boundary (and it is an extreme point) for the second case. Since $\sigma^2_{Z_3}\geq \sigma^2_{Z_2}$, we also include this boundary in the plot. For the first case, the boundary $\frac{P\sigma^2_{Z_3}}{P+\sigma^2_{Z_3}}<P$ is also shown, while for the second, the lower bound $y\geq\frac{16}{15}$  required by the function $f^{(0)}(P,\bar{\alpha})$ in the first step of the Cholesky factorization is shown.
The corresponding channels that matching occurs are those inside the \lq\lq{}fan\rq\rq{} regions. Note that there is a tension between the noise powers $\sigma^2_{Z_2}$ and $\sigma^2_{Z_3}$ for matching to occur with the fixed source and uncoded scheme parameters.

\begin{figure}[tb]
\centering
\includegraphics[width=16cm]{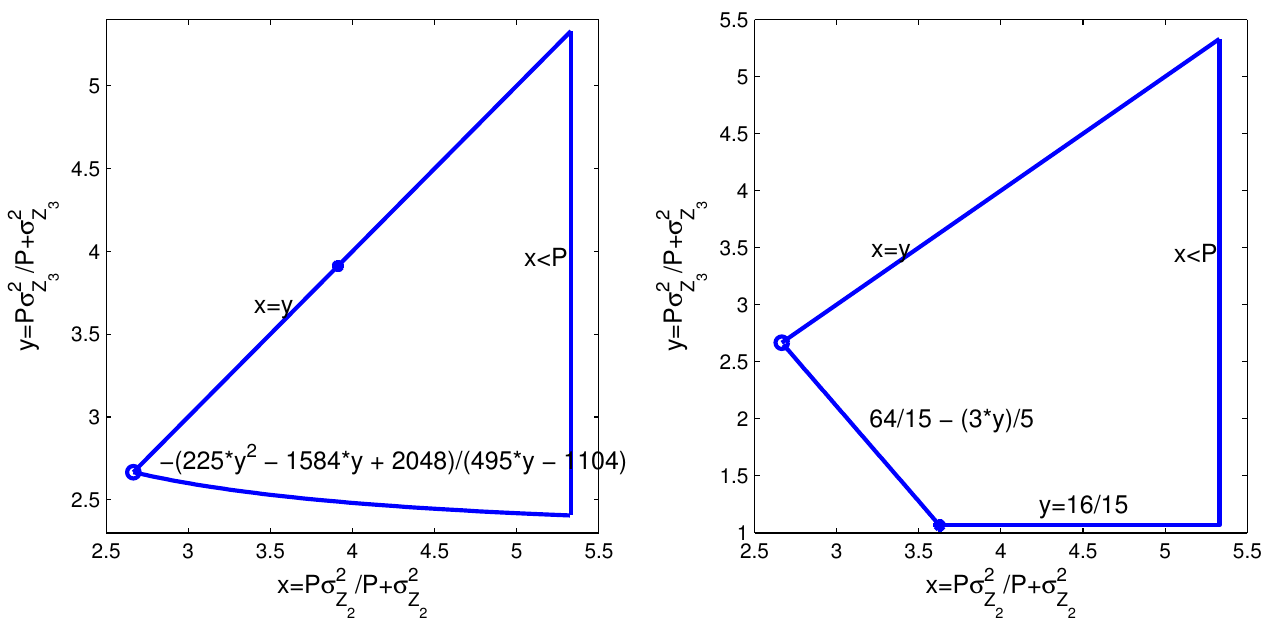}
\caption{Illustration of the regions of matched channel parameters when $(\rho_1,\rho_2)=(\frac{1}{2},\frac{1}{6})$ for the two covariance matrices (\ref{eqn:cov1}) and (\ref{eqn:cov2}), respectively.  \label{fig:N1N2}}
\end{figure}

\section{Vector Gaussian CEO on a Gaussian Multiple-Access Channel}

In this section we consider the problem of sending correlated Gaussian sources on a Gaussian multiple-access channel, where the transmitters observe noise linear combinations of the source components; see also Fig. \ref{fig:MAC} for an illustration. 

A zero-mean vector Gaussian source $(S_1[n],S_2[n],\ldots,S_M[n])$ has a covariance matrix $\Sigma_{S_1,S_2,\ldots,S_M}$ (or simply $\Sigma_{S_{[1:M]}}$). There are a total of $L$ sensors, whose observations are $(T_1[n],T_2[n],\ldots,T_L[n])$, respectively, with covariance matrix $\Sigma_{T_1,T_2,\ldots,T_L}$ (or simply $\Sigma_{T_{[1:L]}}$).  The source and observations are jointly Gaussian. Each sensor observes $T^N_\ell$, encodes it under an average transmission power constraint $P_\ell$, $\ell=1,2,\ldots,L$.
The channel output is given as
\begin{align}
Y[n]=Z[n]+\sum_{\ell=1}^L\delta_\ell X_\ell[n], \quad n=1,2,\ldots, N,
\end{align}
where the channel amplification factors $\delta_\ell>0$, $\ell=1,2,\ldots,L$. The receiver wishes to reconstruct $(S^N_1,S^N_2,\ldots,S^N_M)$ using channel output $Y^N$ to minimize the individual MSE measure, which achieves MSE distortion $D_m$ for $S_m$, {\em i.e.}, $D_m=\frac{1}{N}\sum_{n=1}^N\Ex(S_m[n]-\hat{S}_m[n])^2$.

Notice that due to the jointly Gaussian distribution, we can write
\begin{align}
 \tilde{S}_m\triangleq\Ex[S_m|T_1,T_2,\ldots,T_L]=\sum_{\ell=1}^L\gamma_{m,\ell}T_\ell,\quad m=1,2,\ldots,M.\label{eqn:Stilde}
\end{align}
The parameters $\gamma_{m,\ell}$ can be conveniently written as a matrix $\Gamma$, and computed as
\begin{align}
{\Gamma}=\Sigma_{S_{[1:M]},T_{[1:L]}}\Sigma^{-1}_{T_{[1:L]}},
\end{align}
where $\Sigma_{S_{[1:M]},T_{[1:L]}}$ is the cross-covariance matrix between the random vectors $(S_1,S_2,\ldots,S_M)$ and $(T_1,T_2,\ldots,T_L)$.

The problem can be equivalently formulated as computation of linear functions of Gaussian sources on the multiple-access channel. In this alternative setting, the functions to be computed are $(S_1,S_2,\ldots,S_M)$, which can be represented as noisy linear functions of the sensor observations $(T_1,T_2,\ldots,T_L)$. This alternative formulation is notationally more involved in the current problem setting, but we shall explore this connection in a separate work.

We assume $M\leq L$ since the other case can be reduced to this case without loss of generality. We will consider the case that the matrices  $\Sigma_{S_{[1:M,]}}$,  $\Sigma_{T_{[1:L]}}$, $\Sigma_{\tilde{S}_{[1:M]}}$ and $\Sigma_{S_{[1:M]},T_{[1:L]}}$ all have full (row) rank, which hold in general except certain degenerate cases. Denote the entries of $\Sigma_{T_{[1:L]}}$ as $\psi_{i,j}$. The uncoded scheme we consider is
\begin{align}
X_\ell[n]=\eta_\ell\sqrt{\frac{P_\ell}{\psi_{\ell,\ell}}}T_\ell[n],\quad \ell=1,2,\ldots,L,\quad n=1,2,\ldots,N,\label{eqn:Xform}
\end{align}
where $\eta_\ell$ is either $+1$ or $-1$  to be specified next. In other words, each sensor sends its noisy observations directly using the full power, but it can choose whether to negate its observations. The $m$-th receiver estimates $S_m[n]$ as $\hat{S}_m[n]=\Ex[S_m[n]|Y[n]]$.

Define
\begin{align}
&\bar{\alpha}\triangleq\left[\Sigma_{S_{[1:M]},T_{[1:L]}}\Sigma^t_{S_{[1:M]},T_{[1:L]}}\right]^{-1}\nonumber\\
&\qquad\cdot\Sigma_{S_{[1:M]},T_{[1:L]}}\Sigma_{T_{[1:L]}}\left(\delta_1 \eta_1\sqrt{\frac{P_1}{\psi_{1,1}}},\delta_2 \eta_2\sqrt{\frac{P_2}{\psi_{2,2}}},\ldots,\delta_L \eta_L\sqrt{\frac{P_L}{\psi_{L,L}}}\right)^t,\label{eqn:alphas}
\end{align}
and we assume $\alpha_m\neq 0$, $m=1,2,\ldots,M$, which is true in general except certain degenerate cases. Our main result on this problem is summarized in the following theorem.

\begin{figure}[tb]
\centering
\includegraphics[width=16cm]{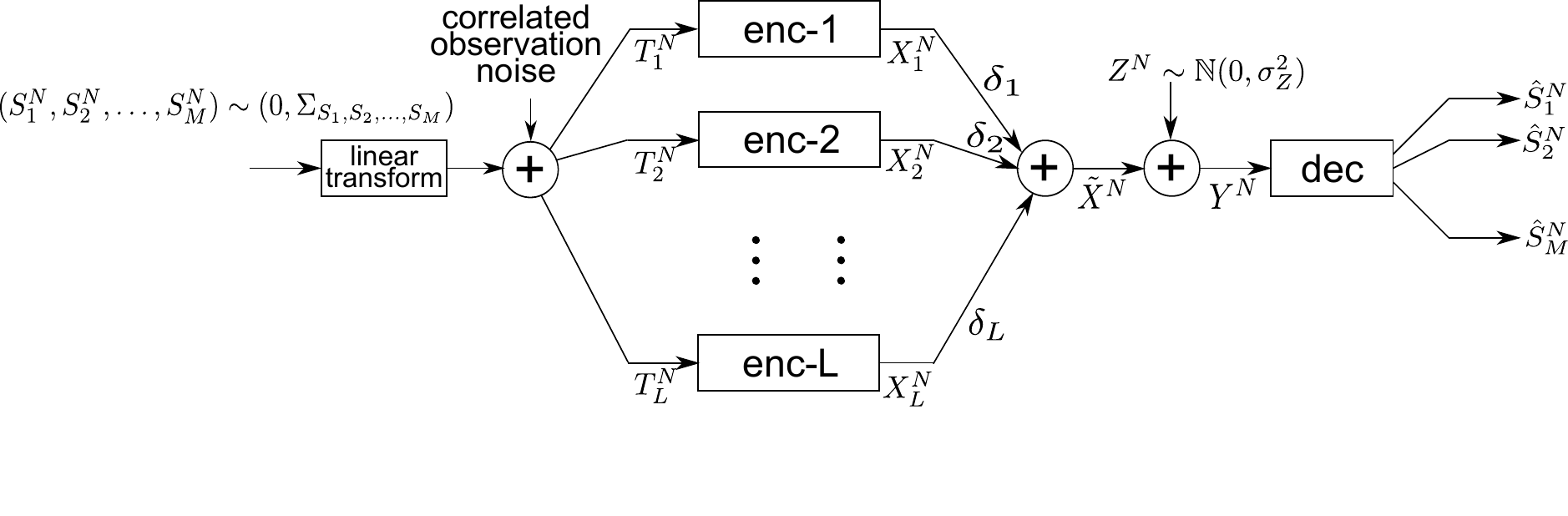}
\vspace{-0.5cm}
\caption{Sending correlated Gaussian sources on a Gaussian multiple-access channel with noisy observations.  \label{fig:MAC}}
\end{figure}

\begin{theorem}
\label{theorem:MAC}
A Gaussian multiple-access channel is said to be matched to a given Gaussian source and the uncoded scheme with parameters $\bar{\eta}$, and the distortion vector induced by the given scheme is on the boundary of the achievable distortion region and thus optimal, if
\begin{enumerate}
\item $\eta_\ell\eta_{\ell'}\psi_{\ell,\ell'}\geq 0,\quad 1\leq \ell<\ell'\leq L$;
\item The vector $\left(\delta_1 \eta_1\sqrt{\frac{P_1}{\psi_{1,1}}},\delta_2 \eta_2\sqrt{\frac{P_2}{\psi_{2,2}}},\ldots,\delta_L \eta_L\sqrt{\frac{P_L}{\psi_{L,L}}}\right)\Sigma_{T_{[1:L]}}$ is in the row space of the matrix $\Sigma_{S_{[1:M]},T_{[1:L]}}$;
\item $\sigma^2_Z\geq \frac{\lambda_2P}{1-\lambda_2}$, where $\lambda_2$ is the second largest eigenvalue of the matrix $\Pi\Sigma_{\tilde{S}_{[1:M]}}\Pi$, 
\begin{align}
P\triangleq \sum_{\ell=1}^L\delta^2_\ell P_\ell+2\sum_{\ell=1}^L\sum_{\ell'=\ell+1}^L\rho^*_{\ell,\ell'}\delta_\ell\delta_{\ell'}\sqrt{P_\ell P_{\ell'}},
\end{align}
and $\Pi$ is a diagonal matrix with diagonal entries
%\begin{align}
%\left(\sqrt{\frac{\alpha_1}{\sum_{i=1}^M\rho_{1,i}\alpha_i}},\sqrt{\frac{\alpha_2}{\sum_{i=1}^M\rho_{2,i}\alpha_i}},\ldots,\sqrt{\frac{\alpha_M}{\sum_{i=1}^M\rho_{M,i}\alpha_i}}\right),
%\end{align}
\begin{align}
\left(\frac{\alpha_1}{\sqrt{\alpha_1\sum_{i=1}^M\rho_{1,i}\alpha_i}},\frac{\alpha_2}{\sqrt{\alpha_2\sum_{i=1}^M\rho_{2,i}\alpha_i}},\ldots,\frac{\alpha_M}{\sqrt{\alpha_M\sum_{i=1}^M\rho_{M,i}\alpha_i}}\right).
\end{align}
and $\rho_{m,j}$'s are the entries of the matrix
\begin{align}
\Sigma_{\tilde{S}_{[1:M]}}=\Sigma_{S_{[1:M]},T_{[1:L]}}\Sigma^{-1}_{T_{[1:L]}}\Sigma^t_{S_{[1:M]},T_{[1:L]}}.
\end{align}
\end{enumerate}
\end{theorem}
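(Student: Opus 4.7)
The plan is to adapt the outer-bound-plus-critical-conditions methodology of Section 2 to the MAC setting. A useful preliminary reduction: because each encoder is a function of $T^N$ only, the Markov chain $(S^N_m-\tilde{S}^N_m)\leftrightarrow T^N \leftrightarrow Y^N \leftrightarrow \hat{S}^N_m$ holds and the MSE decomposes as $D_m=\mathrm{Var}(S_m-\tilde{S}_m)+\tilde{D}_m$ with $\tilde{D}_m=\tfrac{1}{N}\sum_n \Ex(\tilde{S}_m[n]-\hat{S}_m[n])^2$, so without loss of optimality the problem becomes a Gaussian MAC whose source vector is the CEO-reduced $(\tilde{S}_1,\ldots,\tilde{S}_M)$ with covariance $\Sigma_{\tilde{S}_1,\ldots,\tilde{S}_M}$.

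Following the outer-bound template, I would introduce zero-mean jointly Gaussian auxiliaries $(W_1,\ldots,W_M)$ independent of $(S,T,Z)$ with a free covariance $\Sigma_{W_1,\ldots,W_M}$, set $U_m=\tilde{S}_m+W_m$, and analyze the scalar functional
$$E(\Sigma_{W_1,\ldots,W_M})=\exp\!\left[\frac{2}{N}I(U_1^N,\ldots,U_M^N;Y^N)\right],$$
which is simpler than its broadcast counterpart since the MAC has a single output and no degradation ladder. Upper-bounding via the power constraint and Gaussianity of $Z$ yields $E(\Sigma_W)\le (P+\sigma_Z^2)/\sigma_Z^2$ times $|\Sigma_{U_1,\ldots,U_M}|$; lower-bounding via distortion constraints on the $\tilde{S}_m$'s gives the analog of (\ref{eqn:BCsource}). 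The critical equality conditions are: full channel-power utilization, $h(Y^N)=\tfrac{N}{2}\log 2\pi e(P+\sigma_Z^2)$; a sufficiency identity $h(Y^N|\tilde{S}_1^N,\ldots,\tilde{S}_M^N)=h(Y^N|U_1^N,\ldots,U_M^N)$ asserting that the noisy auxiliaries are as informative about $Y$ as the sources themselves; and pairwise orthogonality $h(U_m^N|Y^N,\{U_j^N\}_{j\ne m})=h(U_m^N|Y^N)$ for each $m$. The remaining tightness conditions hold automatically under jointly Gaussian coding.

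Substituting the uncoded scheme $X_\ell[n]=\eta_\ell\sqrt{P_\ell/\psi_{\ell,\ell}}\,T_\ell[n]$ reduces the channel to a scalar Gaussian channel $Y=X+Z$ with input $X=\sum_\ell \delta_\ell \eta_\ell\sqrt{P_\ell/\psi_{\ell,\ell}}\,T_\ell$. Define $\beta_m=\Ex[\tilde{S}_m X]/P$ and $V_m=\tilde{S}_m-\beta_m X+W_m$, exactly as in Section 2.2. The orthogonality conditions then pin down every off-diagonal entry of $\Sigma_{V_1,\ldots,V_M}$ via $\gamma_{j,m}=-\beta_j\beta_m\, P\sigma_Z^2/(P+\sigma_Z^2)$, while the sufficiency identity pins down the diagonal entries — the decoupling is in fact cleaner than in the broadcast case because only one conditional variance $\sigma^2_{X|Y}=P\sigma_Z^2/(P+\sigma_Z^2)$ appears. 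Condition 2 is the precise algebraic translation of the sufficiency identity: it requires $X\in\mathrm{span}(\tilde{S}_1,\ldots,\tilde{S}_M)$, i.e., that the vector of $T_\ell$-coefficients of $X$ left-multiplied by $\Sigma_T$ lie in the row space of $\Sigma_{(S_1,\ldots,S_M),(T_1,\ldots,T_L)}$; when this holds, solving the normal equations produces exactly $\bar{\alpha}$ as in (\ref{eqn:alphas}) and $X=\sum_m\alpha_m\tilde{S}_m$. Condition 1 ($\eta_\ell\eta_{\ell'}\psi_{\ell,\ell'}\ge 0$) is extracted from Lemma 1 (necessarily $\alpha_m\beta_m\ge 0$) combined with the full-power equation applied to the reduced scalar model, via a short positivity argument on $\mathrm{Var}(X)=\sum_{\ell,\ell'}\delta_\ell\delta_{\ell'}\eta_\ell\eta_{\ell'}\sqrt{P_\ell P_{\ell'}/(\psi_{\ell,\ell}\psi_{\ell',\ell'})}\,\psi_{\ell,\ell'}$. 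Condition 3 arises from enforcing $\Sigma^{(0)}=\Sigma_{V_1,\ldots,V_M}-\Sigma_{\tilde{S}_1,\ldots,\tilde{S}_M}+P\bar\beta\bar\beta^t\succeq 0$, which after the diagonal rescaling $\Pi$ reduces, exactly as in Corollary \ref{coro2}, to $\lambda_2\le \sigma_Z^2/(P+\sigma_Z^2)$, i.e., $\sigma_Z^2\ge \lambda_2 P/(1-\lambda_2)$.

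The main obstacle will be the derivation and interpretation of condition 2. In contrast to the broadcast setting, where the channel input is transparently a linear combination of source components and therefore trivially in the source span, here the channel input is a linear combination of noisy sensor observations and the row-space condition is a genuinely non-trivial constraint. Establishing its equivalence with the information-theoretic identity $h(Y^N|\tilde{S}_1^N,\ldots,\tilde{S}_M^N)=h(Y^N|U_1^N,\ldots,U_M^N)$ and verifying that (\ref{eqn:alphas}) then produces the correct regression coefficients requires a careful linear-algebraic analysis of $\Sigma_{(S),(T)}$, $\Sigma_T$ and $\Sigma_{\tilde{S}}$; once this is in place, the remainder is essentially a recapitulation of the broadcast argument applied to a scalar-input problem.
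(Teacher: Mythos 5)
Your overall architecture is the same as the paper's: reduce to the CEO-equivalent source $\tilde{S}_m=\Ex[S_m|T_1,\ldots,T_L]$ with distortions $D_m-\Delta_m$, introduce auxiliary Gaussians $U_m=\tilde{S}_m+W_m$, combine a channel-capacity upper bound with a source-coding lower bound, extract the critical equality conditions, and recycle the broadcast machinery (Corollary \ref{coro2}) to turn the residual conditions into the eigenvalue threshold of condition 3. Your treatment of conditions 2 and 3 — the row-space condition as the algebraic translation of $h(Y^N|X^N_1,\ldots,X^N_L)=h(Y^N|U^N_1,\ldots,U^N_M)$, and the reduction of the remaining orthogonality conditions to the broadcast problem with source $\tilde{S}$ and input $\tilde{X}$ — matches the paper.

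There is, however, a genuine gap in your channel-side argument, and it propagates into your derivation of condition 1. The individual power constraints $\Ex X_\ell^2\le P_\ell$ and the Gaussianity of $Z$ do \emph{not} by themselves yield $I(X_1^N,\ldots,X_L^N;Y^N)\le\frac{N}{2}\log(1+P/\sigma_Z^2)$ with the coherent power $P=\sum_\ell\delta_\ell^2P_\ell+2\sum_{\ell<\ell'}\rho^*_{\ell,\ell'}\delta_\ell\delta_{\ell'}\sqrt{P_\ell P_{\ell'}}$: the received power depends on the cross-correlations $\Ex[X_\ell X_{\ell'}]$, which are not controlled by the marginal constraints. The paper closes this by invoking Witsenhausen's maximal-correlation bound: since each $X_\ell^N$ is a function of $T_\ell^N$ only, the correlation coefficient of any pair $(X_\ell,X_{\ell'})$ cannot exceed $\rho^*_{\ell,\ell'}=|\psi_{\ell,\ell'}|(\psi_{\ell,\ell}\psi_{\ell',\ell'})^{-1/2}$. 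Once that is in place, condition 1 falls out immediately and has nothing to do with Lemma 1: the uncoded inputs have pairwise correlation $\eta_\ell\eta_{\ell'}\psi_{\ell,\ell'}(\psi_{\ell,\ell}\psi_{\ell',\ell'})^{-1/2}$, and this attains the Witsenhausen ceiling $\rho^*_{\ell,\ell'}$, so that (\ref{eqn:channelcondition}) is met with equality, precisely when $\eta_\ell\eta_{\ell'}\psi_{\ell,\ell'}\ge 0$ for all pairs. Lemma 1 concerns the sign pattern $\alpha_i\beta_i\ge 0$ needed for the source-side matrix $\Sigma_{V_1,\ldots,V_M}$ to be positive semidefinite; it is already absorbed into condition 3 through Corollary \ref{coro2} and plays no role in condition 1. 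Your positivity argument on $\mathrm{Var}(X)$ points in the right direction, but without Witsenhausen's bound the outer bound you would be comparing against would not contain the absolute values $|\psi_{\ell,\ell'}|$, and the sign condition would never emerge as an equality requirement.
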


These conditions can be intuitively explained as follows: condition one guarantees that the channel inputs from all transmitters coherently add up; condition two stems from the  requirement  that the noisy observations should serve the same role as the underlying sources for the chosen power constraints and amplification factors, {\em i.e.,} as if the observation noise does not exist; condition three is similar to the effect in the previous problem where once a channel is matched, a more noisy channel will also induce a match.

When all $\psi_{\ell,\ell\rq{}}\geq 0$, we can simply choose $\eta_\ell=+1$ (or $-1$) for all $\ell$ to satisfy the first condition. However, when some of the terms $\psi_{\ell,\ell\rq{}}$ are negative, a simple algorithmic approach can be used to determine whether there exists a valid assignment of $\{\eta_\ell,\ell=1,2,\ldots,L\}$. In fact this condition is completely source dependent, and the choice of $\{\eta_\ell,\ell=1,2,\ldots,L\}$ is unique up to a negation (assuming any component $T_\ell$ is not completely independent of the others), and thus can be considered fixed for a given source observation covariance matrix.

The proof of this theorem also has two parts given in Section \ref{sec:outerMAC} and Section \ref{sec:innerMAC}. This theorem answers the first question regarding the conditions to certify whether the uncoded scheme is optimal in this communication problem. The answer to the second question for this problem turns out to be simpler than that in the broadcast case, and we discuss in Section \ref{sec:MACexample} as special case several problems previously considered in the literature.

\subsection{Extracting the Critical Conditions from the Outer Bound}
\label{sec:outerMAC}

Define 
\begin{align}
\Delta_m\triangleq\Ex(S_m-\tilde{S}_m)^2,\quad m=1,2,\ldots,M,
\end{align}
and thus
\begin{align}
\Ex\tilde{S}^2_m=\sigma^2_{S_m}-\Delta_m,\quad m=1,2,\ldots,M.
\end{align}
In this remote coding setting, in essence $\tilde{S}_m$'s as defined in (\ref{eqn:Stilde}) are the observable portion of the underlying sources.  The overall distortion can thus be decomposed into two independent parts: the first part is due to encoding the observable portion of the underlying sources $\tilde{S}_m$'s,  and the second is due to the inherent noisy nature of the observations which induces a fixed distortion $\Delta_m$. Thus encoding the source $S_m$ to distortion $D_m$ is equivalent to encoding the source $\tilde{S}_m$ to distortion $D_m-\Delta_m$.

We can now derive an outer bound by combining the approach used in the broadcast problem with a technique based on Witsenhausen's bound \cite{Witsenhausen}. Again consider $M$ auxiliary zero-mean Gaussian random variables $(W_1,W_2,\ldots,W_M)$ with covariance matrix $\Sigma_{W_{[1:M]}}$, which are independent of everything else, and write
\begin{align}
U_m[n]=\tilde{S}_m[n]+W_m[n],\quad m=1,2,\ldots,M,\quad n=1,2,\ldots,N.
\end{align}
Notice the Markov string
\begin{align}
(U^N_1,U^N_2,\ldots,U^N_M)\leftrightarrow (\tilde{S}^N_1,\tilde{S}^N_2,\ldots,\tilde{S}^N_M)\leftrightarrow (T^N_1,T^N_2,\ldots,T^N_L)\leftrightarrow (X^N_1,X^N_2,\ldots,X^N_L)\leftrightarrow Y,
\end{align}
and we can write using the data processing inequality \cite{CoverThomas} that
\begin{align}
I(X^N_1,X^N_2,\ldots,X^N_L;Y^N)\geq I(U^N_1,U^N_2,\ldots,U^N_M;Y^N),\label{eqn:dataprocessing}
\end{align}
where equality holds if and only if
\begin{align}
h(Y^N|X^N_1,X^N_2,\ldots,X^N_L)=h(Y^N|U^N_1,U^N_2,\ldots,U^N_M).\label{eqn:MACindep}
\end{align}

Following the exact steps as in \cite{Gastpar:08} (see also \cite{Lapidoth:10MAC}) and applying Witsenhausen's bound \cite{Witsenhausen}, we can obtain
\begin{align}
I(X^N_1,X^N_2,\ldots,X^N_L;Y^N)\leq \frac{N}{2}\log \left(1+\frac{P}{\sigma^2_Z}\right)
\label{eqn:channelcondition}
\end{align}
where $\rho^*_{\ell,\ell'}=|\psi_{\ell,\ell'}(\psi_{\ell,\ell}\psi_{\ell',\ell'})^{-\frac{1}{2}}|$. This inequality intuitively says that the mutual information between the channel inputs and the output is upper bounded by the capacity of a point-to-point channel, whose power constraint is equal to the resultant signal power when all the inputs on the multiple-access channel are coherently added. We will not attempt to further simplify this condition at this point, since in the context of the uncoded scheme, it has a particularly simple form.

The right hand side of (\ref{eqn:dataprocessing}) can be bounded similarly as in the broadcast problem. Here the equivalent source is $({\tilde{S}_1,\tilde{S}_2,\ldots,\tilde{S}_M})$, and the distortion vectors are $(D_1-\Delta_1,D_2-\Delta_2,\ldots,D_M-\Delta_M)$, and moreover, $\sigma^2_{Z_m}=\sigma^2_{Z}$ for $m=1,2,\ldots,M$. We thus arrive at
\begin{align}
I(U^N_1,U^N_2,\ldots,U^N_M;Y^N)\geq\frac{N}{2}\log\frac{|\Sigma_{\tilde{S}_{[1:M]}}+\Sigma_{W_{[1:M]}}|}{\Pi^M_{m=1}(D_m-\Delta_m+\sigma^2_{W_m})},
\label{eqn:sameasBC}
\end{align}
where equality holds if and only if
\begin{align}
h(U^N_m|Y^N)&=\frac{N}{2}\log [2\pi e(D_m-\Delta_m+\sigma^2_{W_m} )],\quad m=1,2,\ldots,M,\label{eqn:MACcondition1}\\
h(U^N_m|Y^N,U^N_1,U^N_2,\ldots,U^N_{m-1})&=h(U^N_m|Y^N),\quad m=2,3,\ldots,M.\label{eqn:MACcondition2}
\end{align}

An outer bound on the achievable distortion is then obtained by combining (\ref{eqn:dataprocessing}), (\ref{eqn:channelcondition}) and (\ref{eqn:sameasBC}), which we summarize below.
\begin{prop}
Any achievable distortion vector $(D_1,D_2,\ldots,D_M)$ must satisfy the inequality
\begin{align}
\frac{|\Sigma_{\tilde{S}_{[1:M]}}+\Sigma_{W_{[1:M]}}|}{\Pi^M_{m=1}(D_m-\Delta_m+\sigma^2_{W_m})}\leq \left(1+\frac{P}{\sigma^2_Z}\right),
\end{align}
for any positive semidefinite $\Sigma_{W_{[1:M]}}$. Moreover, a distortion vector that makes (\ref{eqn:MACindep}), (\ref{eqn:MACcondition1}) and (\ref{eqn:MACcondition2}) hold, and (\ref{eqn:channelcondition})  hold with equality for some positive semidefinite $\Sigma_{W_{[1:M]}}$ is Pareto-optimal.
\end{prop}

We emphasize that for the purpose of this work, the precise form of this outer bound is less important than the extracted matching conditions (\ref{eqn:MACindep}), (\ref{eqn:MACcondition1}) and (\ref{eqn:MACcondition2}), and (\ref{eqn:channelcondition}) being equality. The condition (\ref{eqn:channelcondition}) being equality and the condition (\ref{eqn:MACcondition1}) can be satisfied simply by choosing a jointly Gaussian coding scheme adjusted linearly to utilize the full power, and the conditions (\ref{eqn:MACindep}) and (\ref{eqn:MACcondition2}) are almost identical to  (\ref{eqn:degerate}) and (\ref{eqn:BCcondition2}) in the broadcast case.

\subsection{The Forward Matching Conditions}
\label{sec:innerMAC}

Since the uncoded scheme takes single letter encoding function, (\ref{eqn:channelcondition}) being equality is equivalent to
\begin{align}
I(X_1,X_2,\ldots,X_L;Y)=\frac{1}{2}\log \left(1+\frac{P}{\sigma^2_Z}\right).
\end{align}
Because in the uncoded scheme the channel input $X$ is given in (\ref{eqn:Xform}),  the equality holds as long as
\begin{align}
\eta_\ell\eta_{\ell'}\psi_{\ell,\ell'}\geq 0,\quad 1\leq \ell<\ell'\leq L.
\end{align}
This yields the first condition stated in Theorem \ref{theorem:MAC}.

The conditions (\ref{eqn:MACindep}) and (\ref{eqn:MACcondition2}) in the context of uncoded scheme are equivalent to
\begin{align}
h(Z)&=h(Y|U_1,U_2,\ldots,U_M),\label{eqn:MACasinnonoise}\\
h(U_m|Y,U_1,U_2,\ldots,U_{m-1})&=h(U_m|Y),\quad m=2,3,\ldots,M.\label{eqn:MACcondition2singleletter}
\end{align}

Denote
\begin{align}
\tilde{X}=\sum_{\ell=1}^L\delta_\ell X_\ell.
\end{align}
For (\ref{eqn:MACasinnonoise}) to hold with equality,  two conditions must hold
\begin{align}
\Ex[\tilde{X}|\tilde{S}_1,\tilde{S}_2,\ldots,\tilde{S}_M]=\tilde{X},\label{eqn:dataprocessing1}
\end{align}
and
\begin{align}
\Ex[\tilde{X}|U_1,U_2,\ldots,U_M]=\tilde{X}.\label{eqn:dataprocessing2}
\end{align}

Let us consider the first condition  (\ref{eqn:dataprocessing1}). Due to the jointly Gaussian distribution, there exists a set of coefficients $(\alpha_1,\alpha_2,\ldots,\alpha_M)$ such that
\begin{align}
\Ex[\tilde{X}|\tilde{S}_1,\tilde{S}_2,\ldots,\tilde{S}_M]=\sum_{m=1}^M\alpha_m\tilde{S}_m=\sum_{m=1}^M\alpha_m\sum_{\ell=1}^L\gamma_{m,\ell}T_\ell.
\end{align}
However notice that
\begin{align}
\tilde{X}=\sum_{\ell=1}^L\delta_\ell \eta_\ell\sqrt{\frac{P_\ell}{\psi_{\ell,\ell}}}T_\ell,
\end{align}
thus the condition (\ref{eqn:dataprocessing1}) is equivalent to the fact that the vector
\begin{align}
\left(\delta_1 \eta_1\sqrt{\frac{P_1}{\psi_{1,1}}},\delta_2 \eta_2\sqrt{\frac{P_2}{\psi_{2,2}}},\ldots,\delta_L \eta_L\sqrt{\frac{P_L}{\psi_{L,L}}}\right)
\end{align}
is in the row space of the matrix ${\Gamma}$. Equivalently, the vector
$$\left(\delta_1 \eta_1\sqrt{\frac{P_1}{\psi_{1,1}}},\delta_2 \eta_2\sqrt{\frac{P_2}{\psi_{2,2}}},\ldots,\delta_L \eta_L\sqrt{\frac{P_L}{\psi_{L,L}}}\right)\Sigma_{T_{[1:L]}}$$
needs to be in the row space of the matrix $\Sigma_{S_{[1:M]},T_{[1:L]}}$. This leads to the second condition stated in Theorem \ref{theorem:MAC}.
When this condition is satisfied, the coefficients $\bar{\alpha}$ can be determined exactly as in (\ref{eqn:alphas}).

The conditions (\ref{eqn:MACcondition2singleletter}) and (\ref{eqn:dataprocessing2}) are now identical to the broadcast case with $\tilde{S}_1,\tilde{S}_2,\ldots,\tilde{S}_M$ being the sources and $\tilde{X}$ being the channel input, and all the receivers in a broadcast channel that has the same channel noise variance. By Corollary 2, such a channel is matched when the second largest eigenvalue of the matrix $\Pi\Sigma_{\tilde{S}_{[1:M]}}\Pi$ is less than $\frac{\sigma^2_Z}{P+\sigma^2_Z}$, or in other words, the noise power must be above or equal to the given threshold stated in Theorem \ref{theorem:MAC}.

\textbf{Remark:} The first condition in Theorem \ref{theorem:MAC} generally has a unique solution if it can be satisfied, up to a negation of all the signs of the channel input signals. The second condition can almost always be satisfied  by choosing appropriate a $(\delta_1,\delta_2,\ldots,\delta_L)$ vector, except a few special cases where an all positive solution does not exist (recall we have assumed $\delta_\ell>0$, and thus only all positive solutions are valid). If the third condition is satisfied for certain source-channel-code triple, then it is satisfied for any more noisy channels.  It is seen that the critical conditions in the outer bound derivation essentially decouple the matching problem into several simpler ones, leading to the three largely independent conditions given in Theorem \ref{theorem:MAC}.

\subsection{Matched Channels in Special Case Scenarios}
\label{sec:MACexample}

In the multiple-access setting, the conditions for matching in Theorem \ref{theorem:MAC} are already rather simple, and there is no need to further investigate the properties of matched channels as in the broadcast case. Next we consider two special cases in the general problem setting which extend those considered in \cite{Gastpar:08} and \cite{Lapidoth:10MAC}, respectively.

\subsubsection{The Scalar CEO Problem}
\begin{figure}[tb]
\centering
\includegraphics[width=14cm]{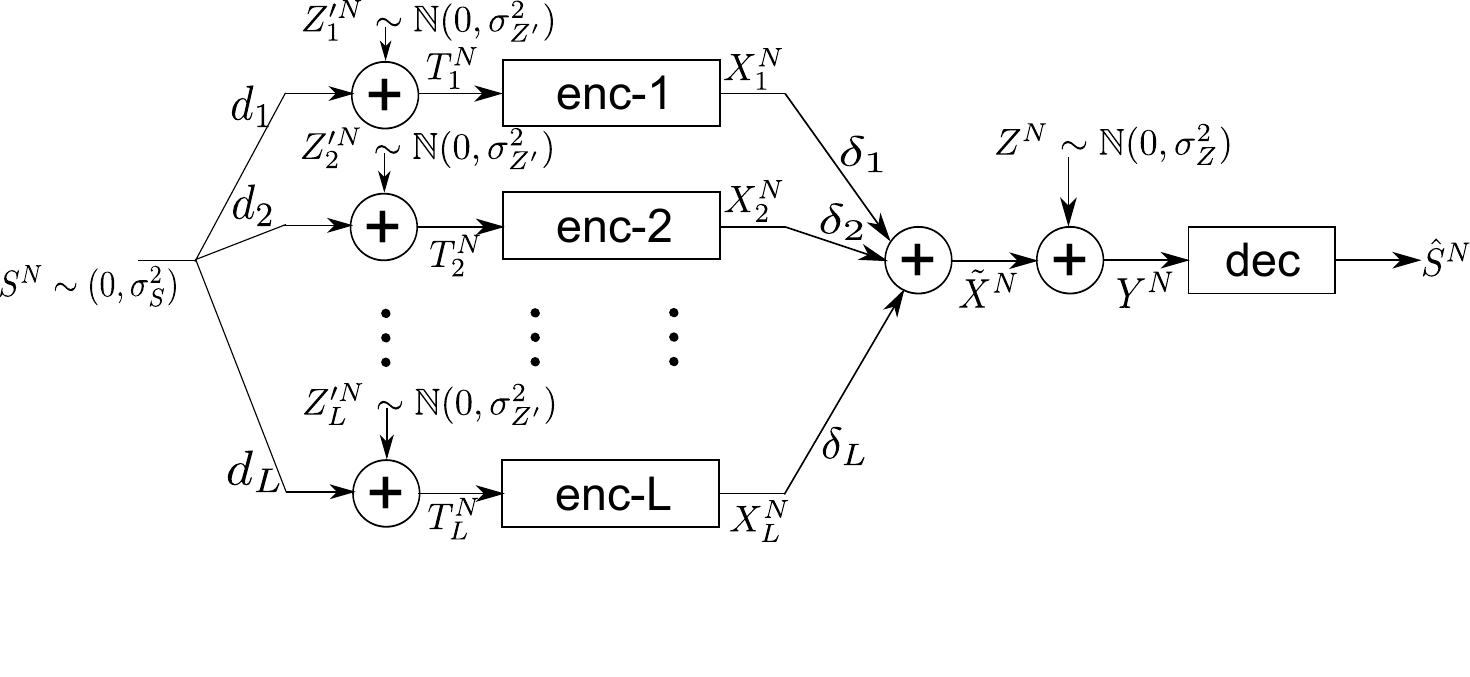}
\vspace{-0.5cm}
\caption{The scalar Gaussian CEO problem on a Gaussian multiple-access channel.  \label{fig:CEO}}
\end{figure}

Consider a zero-mean scalar Gaussian source $S[n]$ with covariance $\sigma^2_S$. There are a total of $L$ sensors, whose observations are
\begin{align}
T_\ell[n]=d_\ell S[n]+Z\rq{}_\ell[n], \quad \ell=1,2,\ldots,L,\quad n=1,2,\ldots,N,
\end{align}
where $d_\ell\geq 0$ (without loss of generality) and $Z\rq{}_\ell[n]$\rq{}s are the zero-mean independent additive noise with covariance $\sigma^2_{Z\rq{}}$. This special case is depicted in Fig. \ref{fig:CEO}.

It is clear that the first condition in Theorem \ref{theorem:MAC} is satisfied by $\eta_\ell=1$ for all $\ell=1,2,\ldots,L$.
The second condition for this case is equivalent to
\begin{align}
\left(\delta_1 \sqrt{\frac{P_1}{\psi_{1,1}}},\delta_2 \sqrt{\frac{P_2}{\psi_{2,2}}},\ldots,\delta_L \sqrt{\frac{P_L}{\psi_{L,L}}}\right)\Sigma_{T_{[1:L]}}\propto (d_1,d_2,\ldots,d_L),
\end{align}
where $\propto$ here means a component-wise proportional relation.
In other words, the uncoded scheme is optimal if
\begin{align}
\delta_\ell\sqrt{P_\ell(d^2_\ell\sigma^2_S+\sigma^2_{Z\rq{}})}+\sum_{\ell'\neq \ell}\delta_{\ell'}\sqrt{\frac{P_{\ell\rq{}}}{d^2_{\ell'}\sigma^2_S+\sigma^2_{Z\rq{}}}}d_\ell d_{\ell'}\sigma^2_S\propto d_\ell.
\end{align}
However, the LHS of the above condition can be simplified to
\begin{align}
\delta_\ell\sigma^2_{Z\rq{}}\sqrt{\frac{P_\ell}{d^2_\ell\sigma^2_S+\sigma^2_{Z'}}}+d_\ell\sum_{\ell'=1}^L\delta_{\ell'}d_{\ell'}\sqrt{\frac{P_{\ell'}}{d^2_{\ell'}\sigma^2_S+\sigma^2_{Z'}}}\sigma^2_S,
\end{align}
where the second term is proportional to $d_\ell$, and the first term is proportional to $d_\ell$ if and only if
\begin{align}
\frac{P_\ell\delta^2_\ell}{(d^2_\ell\sigma^2_S+\sigma^2_{Z'})d^2_\ell}=\mbox{const},\quad  \ell=1,2,\ldots,L.
\label{eqn:proportional}
\end{align}

It remains to check the third condition, however in this case $M=1$, and the second eigenvalue of the matrix $\Pi\Sigma_{\tilde{S}_{[1:M]}}\Pi$ can be viewed as zero, thus any noise power $\sigma^2_Z$ will allow a matching. Summarizing the above analysis, it is seen that for the scalar CEO problem on a Gaussian multiple-access channel, as long as the condition (\ref{eqn:proportional}) holds, the uncoded scheme is optimal. Conversely, for any noisy observation qualities, there always exists a matched channel by choosing the values of $\delta_\ell$ properly.

The condition  (\ref{eqn:proportional}) corresponds to a proportional quality requirement: the quality of the observations need to match the transmission powers and the transmission amplification factors. Gastpar \cite{Gastpar:08} showed that when all the sensors have the same observation quality, the same power and the same amplification factor, the uncoded scheme is optimal. Our result thus generalizes it to the proportional case.

\subsubsection{Correlated Gaussian Sources on a Gaussian Multiple-Access Channel}

Consider the case when $M=L$, and we shall assume that the first condition in Theorem \ref{theorem:MAC} can be satisfied. The second condition is also satisfied trivially since the matrix $\Sigma_{S_{[1:M]},T_{[1:L]}}$ is full rank in our problem setting. Thus only the last condition needs to be checked in this case. Equivalently, when $\lambda_2$ is strictly less than $1$, there always exists a noise power $\sigma^2_Z$ such that the channel is matched and thus the uncoded scheme is optimal.

\begin{figure}[tb]
\centering
\includegraphics[width=16cm]{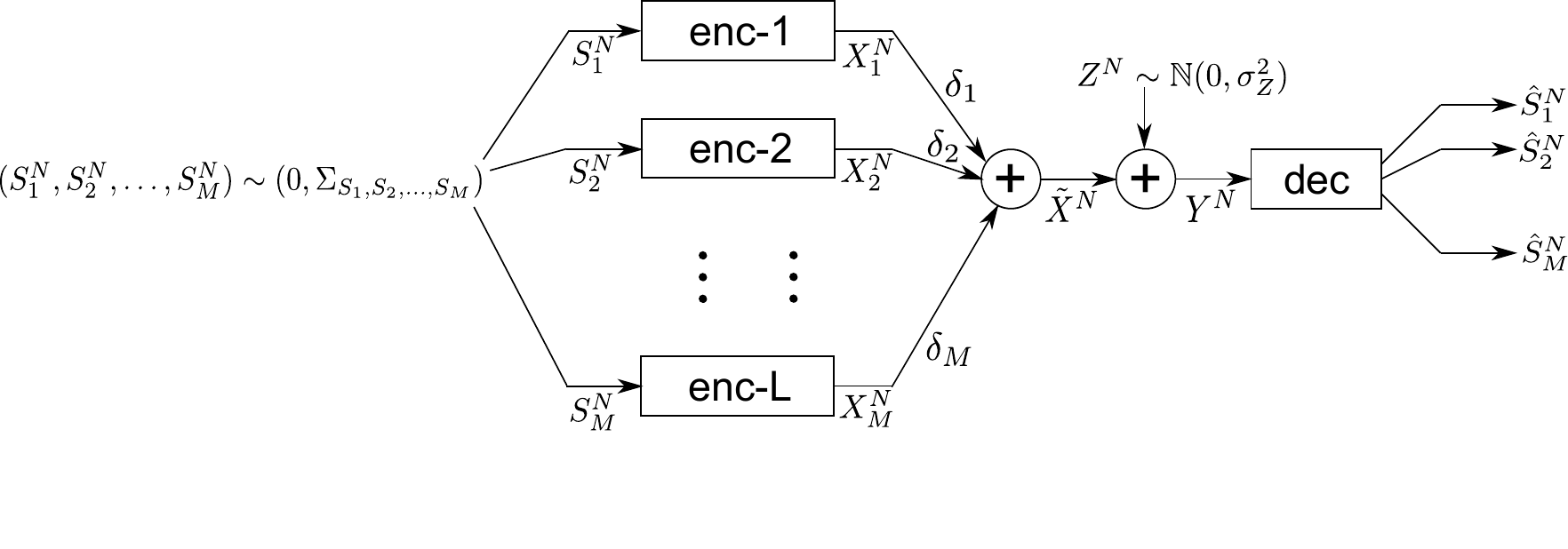}
\vspace{-0.5cm}
\caption{Sending correlated Gaussians on a Gaussian multiple-access channel.  \label{fig:LOCAL}}
\end{figure}

Lapidoth and Tinguely \cite{Lapidoth:10MAC} previously considered the special case when in addition to $M=L$, the observations are in fact noiseless and furthermore $T_m=S_m$, $m=1,2,\ldots,M$; see Fig. \ref{fig:LOCAL}. It was shown that for covariance matrix $\Sigma_{S_{[1:M]}}$ with strictly positive entries, there always exists a noise power $\sigma^2_Z$ such that the uncoded scheme is optimal. Our result generalizes theirs to the case that the observations can be noisy linear combinations, and the covariance matrix $\Sigma_{S_{[1:M]}}$ does not necessarily all have strictly positive entries.

\section{Conclusion}

We considered the problem of determining whether a given uncoded scheme is optimal for multiuser joint source channel coding. It was shown that for both broadcast and multiple-access in the Gaussian setting, matching occurs naturally under certain general conditions. Our approach differs from the more conventional approach in that instead of attempting to find explicit outer bound and inner bound then compare them, our focus is on the critical conditions that make the outer bound hold with equality. This approach has a decoupling effect which significantly simplifies the overall task. As future work, we plan to extend and generalize this approach to explore matching in other channel networks, and also for more general hybrid digital-analog schemes, for example, in the simple setting considered in  \cite{TianShamai:08,TianShamai:11}.

\begin{appendices}

\section{Proof of the Outer Bound in Theorem \ref{theorem:BC}}
\label{appendix:theorem1}
\begin{proof}

To upper-bound $E(\Sigma_{W_{[1:M]}})$, first recall the Markov string
\begin{align}
&Y^N_1\leftrightarrow Y^N_2 \leftrightarrow\ldots\leftrightarrow Y^N_M\leftrightarrow X^N \leftrightarrow (S^N_1,S^N_2,\ldots,S^N_M)\nonumber\\
&\qquad\leftrightarrow (U^N_1,U^N_2,\ldots,U^N_M)\leftrightarrow (U^N_1,U^N_2\leftrightarrow\ldots\leftrightarrow U^N_{M-1})\leftrightarrow\ldots\leftrightarrow U^N_1.\label{eqn:markovstring}
\end{align}
We start by writing the following:
\begin{align}
&\sum_{j=1}^m I(U^N_j;Y^N_j|U^N_1,U^N_2,\ldots,U^N_{j-1})\nonumber\\
&=\sum_{j=1}^m \left[I(U^N_1,U^N_2,\ldots,U^N_j;Y^n_j)-I(U^N_1,U^N_2,\ldots,U^N_{j-1};Y^N_j)\right]\nonumber\\
&=\sum_{j=1}^m \left[h(Y^N_j|U^N_1,U^N_2,\ldots,U^N_{j-1})-h(Y^N_j|U^N_1,U^N_2,\ldots,U^N_j)\right]\nonumber\\
&=\sum_{j=1}^m h(Y^N_j|U^N_1,U^N_2,\ldots,U^N_{j-1})-\sum_{j=1}^m h(Y^N_j|U^N_1,U^N_2,\ldots,U^N_j).
\end{align}
Since physical degradedness is equivalent to stochastic degradedness in the broadcast setting, {\em i.e.,} $Z_j$ can be assumed to be decomposable into two independent components as $Z_{j+1}+\Delta Z_j$, we can apply the entropy power inequality \cite{CoverThomas} for $j=1,2,\ldots,M-1$, 
\begin{align}
&\exp\left[\frac{2}{N}h(Y^N_j|U^N_1,U^N_2,\ldots,U^N_j)\right]\nonumber\\
&\geq \exp\left[\frac{2}{N}h(Y^N_{j+1}|U^N_1,U^N_2,\ldots,U^N_j)\right]+\exp\left[\log(2\pi e (\sigma^2_{Z_j}-\sigma^2_{Z_{j+1}}))\right]\nonumber\\
&=\exp\left[\frac{2}{n}h(Y^N_{j+1}|U^N_1,U^N_2,\ldots,U^N_j)\right]+2 \pi e(\sigma^2_{Z_j}-\sigma^2_{Z_{j+1}}). \label{eqn:applyentropypower}
\end{align}
For $j=M$, it is clear that
\begin{align}
&\exp\left[\frac{2}{N}h(Y^N_M|U^N_1,U^N_2,\ldots,U^N_M)\right]\geq \exp\left[\frac{2}{N}h(Y^N_M|S^N_1,S^N_2,\ldots,S^N_M)\right]= 2 \pi e\sigma^2_{Z_M},
\end{align}
with equality if and only if
\begin{align}
h(Y^N_M|U^N_1,U^N_2,\ldots,U^N_M)=h(Y^N_M|S^N_1,S^N_2,\ldots,S^N_M).\label{eqn:BCasifnonoisea}
\end{align}
It now follows that
\begin{align}
&E(\Sigma_{W_{[1:M]}})\nonumber\\
&=\sum_{m=1}^M (\sigma^2_{Z_{m}}-\sigma^2_{Z_{m+1}}) \exp\left[\frac{2}{N}\sum_{j=1}^m I(U^N_j;Y^N_j|U^N_1,U^N_2,\ldots,U^N_{j-1})\right]\nonumber\\
&\leq \sum_{m=1}^M (\sigma^2_{Z_{m}}-\sigma^2_{Z_{m+1}}) \frac{\exp\left[\frac{2}{N}\sum_{j=1}^m h(Y^N_j|U^N_1,U^N_2,\ldots,U^N_{j-1})\right]}{\prod_{j=1}^m\left[\exp\left(\frac{2}{N}h(Y^N_{j+1}|U^N_1,U^N_2,\ldots,U^N_j)\right)+2 \pi e (\sigma^2_{Z_j}-\sigma^2_{Z_{j+1}})\right]},
\end{align}
where for convenience we have defined $\exp\left[\frac{2}{N}h(Y^N_{M+1}|U^N_1,U^N_2,\ldots,U^N_M)\right]\triangleq 0$.

We upper-bound this summation by considering the summands in the reversed order, \textit{i.e.}, $m=M,M-1,\ldots,1$.
Starting with the summands when $m=M-1$ and $m=M$, we have %(\ref{eqn:firststep}) %as given on the top of next page
%\begin{figure*}[bt]
%\normalsize
%%\setcounter{MYtempeqncnt}{\value{equation}}
%%\addtocounter{MYtempeqncnt}{1}
\begin{align}
&(\sigma^2_{Z_{M-1}}-\sigma^2_{Z_{M}}) \frac{\exp\left[\frac{2}{N}\sum_{j=1}^{M-1} h(Y^N_j|U^N_1,U^N_2,\ldots,U^N_{j-1})\right]}{\prod_{j=1}^{M-1}\left[\exp\left(\frac{2}{N}h(Y^N_{j+1}|U^N_1,U^N_2,\ldots,U^N_j)\right)+2 \pi e(\sigma^2_{Z_j}-\sigma^2_{Z_{j+1}})\right]}\nonumber\\
&\qquad\qquad+\sigma^2_{Z_{M}} \frac{\exp\left[\frac{2}{N}\sum_{j=1}^{M} h(Y^N_j|U^N_1,U^N_2,\ldots,U^N_{j-1})\right]}{\prod_{j=1}^{M}\left[\exp\left(\frac{2}{N}h(Y^N_{j+1}|U^N_1,U^N_2,\ldots,U^N_j)\right)+2 \pi e(\sigma^2_{Z_j}-\sigma^2_{Z_{j+1}})\right]}\nonumber\\
&=\frac{\exp\left[\frac{2}{N}\sum_{j=1}^{M-1} h(Y^N_j|U^N_1,U^N_2,\ldots,U^N_{j-1})\right]}{\prod_{j=1}^{M-1}\left[\exp\left(\frac{2}{N}h(Y^N_{j+1}|U^N_1,U^N_2,\ldots,U^N_j)\right)+2 \pi e (\sigma^2_{Z_j}-\sigma^2_{Z_{j+1}})\right]}\nonumber\\
&\qquad\qquad\cdot\left[(\sigma^2_{Z_{M-1}}-\sigma^2_{Z_{M}}) +\sigma^2_{Z_{M}}\frac{\exp\left[\frac{2}{N}h(Y^N_M|U^N_{1},U^N_{2},\ldots,U^N_{M-1})\right]}{2 \pi e\sigma^2_{Z_{M}}}\right]\nonumber\\
&=\frac{1}{2 \pi e}\frac{\exp\left[\frac{2}{N}\sum_{j=1}^{M-1} h(Y^N_j|U^N_1,U^N_2,\ldots,U^N_{j-1})\right]}{\Pi_{j=1}^{M-2}\left[\exp\left(\frac{2}{N}h(Y^N_{j+1}|U^N_1,U^N_2,\ldots,U^N_j)\right)+2 \pi e(\sigma^2_{Z_{M-1}}-\sigma^2_{Z_{M}})\right]}.\label{eqn:firststep}
\end{align}
%\hrulefill
%\end{figure*}
%\setcounter{equation}{\value{MYtempeqncnt}}
Continuing this line of reduction, we finally arrive at when $m=1$
%(\ref{eqm:upperboundE}) when $k=1$
%\begin{figure*}[bt]
%\normalsize
%\setcounter{MYtempeqncnt}{\value{equation}}
%\addtocounter{MYtempeqncnt}{1}
\begin{align}
&E(\Sigma_{W_{[1:M]}})\nonumber\\
&\leq (\sigma^2_{Z_{1}}-\sigma^2_{Z_{2}})\frac{\exp\left[\frac{2}{N} h(Y^N_1)\right]}{\exp\left(\frac{2}{N}h(Y^N_{2}|U^N_1)\right)+2 \pi e(\sigma^2_{Z_{1}}-\sigma^2_{Z_{2}})}+\frac{1}{2 \pi e}\frac{\exp\left[\frac{2}{N}\sum_{j=1}^{2} h(Y^N_j|U^N_1,\ldots,U^N_{j-1})\right]}{\exp\left(\frac{2}{N}h(Y^N_{2}|U^N_1)\right)+2 \pi e(\sigma^2_{Z_{2}}-\sigma^2_{Z_{1}})}\nonumber\\
&=\frac{\exp\left[\frac{2}{N} h(Y^N_1)\right]}{\exp\left(\frac{2}{N}h(Y^N_{2}|U^N_1))\right)+2 \pi e(\sigma^2_{Z_{1}}-\sigma^2_{Z_{2}})}\left[(\sigma^2_{Z_{1}}-\sigma^2_{Z_{2}})+\frac{\exp \left[\frac{2}{N}h(Y^N_2|U^N_1)\right]}{2\pi e}\right]\nonumber\\
&=\frac{\exp\left[\frac{2}{N} h(Y^N_1)\right]}{2\pi e}\leq P+\sigma^2_{Z_1},\label{eqn:BCchannelA}
\end{align}
%\hrulefill
%\end{figure*}
%\setcounter{equation}{\value{MYtempeqncnt}}
where the last inequality is by the concavity of the $\log(\cdot)$ function and the given power constraint. The chain of inequalities in (\ref{eqn:BCchannelA}) holds with equality   holds if and only if
\begin{align}
h(Y^N_1)=\frac{N}{2}\log 2\pi e( P+\sigma^2_{Z_1}),
\end{align}
as well as (\ref{eqn:BCasifnonoisea}) and the entropy power inequalities hold with equality.

%Notice that

%
%and by leveraging this Markov string, it can be shown that
%\begin{align}
%E(\Sigma_{W_1,W_2,\ldots,W_M})\leq P+\sigma^2_{Z_1},\label{eqn:BCchannel}
%\end{align}
%with equality holds if and only if
%\begin{align}
%h(Y^N_M|S^N_1,S^N_2,\ldots,S^N_M)=h(Y^N_M|U^N_1,U^N_2,\ldots,U^N_M),
%\label{eqn:degerate}
%\end{align}
%and the entropy power inequalities hold with equality
%\begin{align}
%\exp\left[\frac{2}{N}h(Y^N_m|U^N_m,U^N_{m-1},\ldots,U^N_1)\right]\geq \exp\left[\frac{2}{N}h(Y^N_{m+1}|U^N_m,U^N_{m-1},\ldots,U^N_1)\right]+2\pi e[\sigma^2_{Z_m}-\sigma^2_{Z_{m+1}}],\nonumber\\
%m=1,2,\ldots,M.\label{eqn:epi}
%\end{align}
%The derivation of this inequality is almost identical to that in \cite{TianDiggaviShamai:09} (pp. 130-131), and is closely related to the converse proof of the Gaussian broadcast channel capacity by Bergmans \cite{Bergmans:74}. We thus omit the details for brevity.

We next lower bound $E(\Sigma_{W_{[1:M]}})$. By the rate-distortion theorem \cite{CoverThomas}
\begin{align}
I(U^N_1;Y^N_1)\geq \frac{N}{2}\log\frac{\sigma^2_{S_1}+\sigma^2_{W_1}}{D+\sigma^2_{W_1}},
\end{align}
with equality holds if and only if
\begin{align}
h(U^N_1|Y^N_1)=\frac{N}{2}\log [2\pi e(D_1+\sigma^2_{W_1} )].
\end{align}
Furthermore,
\begin{align}
&I(U^N_j;Y^N_j|U^N_1,U^N_2,\ldots,U^N_{j-1})\nonumber\\
&=h(U^N_j|U^N_1,U^N_2,\ldots,U^N_{j-1})-h(U^N_j|Y^N_j,U^N_1,U^N_2,\ldots,U^N_{j-1})\nonumber\\
&=\frac{N}{2}\log\frac{|2\pi e(\Sigma_{S_{[1:j]}}+\Sigma_{W_{[1:j]}})|}{|2\pi e(\Sigma_{S_{[1:j]}}+\Sigma_{W_{[1:j]}})|}-h(U^N_j|Y^N_j,U^N_1,U^N_2,\ldots,U^N_{j-1})\nonumber\\
&\geq \frac{N}{2}\log\frac{|2\pi e(\Sigma_{S_{[1:j]}}+\Sigma_{W_{[1:j]}})|}{|2\pi e(\Sigma_{S_{[1:j-1]}}+\Sigma_{W_{[1:j-1]}})|}-h(U^N_j|Y^N_j)\label{eqn:BCcondition2a}\\
&\geq\frac{N}{2}\log\frac{|2\pi e(\Sigma_{S_{[1:j]}}+\Sigma_{W_{[1:j]}})|}{|2\pi e(\Sigma_{S_{[1:j-1]}}+\Sigma_{W_{[1:j-1]}})|}-\frac{N}{2}\log [2\pi e(D_j+\sigma^2_{W_j} )]\label{eqn:BCcondition3a}\\
&=\frac{N}{2}\log\frac{|\Sigma_{S_{[1:j]}}+\Sigma_{W_{[1:j]}}|}{|(\Sigma_{S_{[1:j-1]}}+\Sigma_{W_{[1:j-1]}})|[D_m+\sigma^2_{W_j}]},\nonumber
\end{align}
where (\ref{eqn:BCcondition2a}) is because conditioning reduces entropy, and (\ref{eqn:BCcondition3a}) is because Gaussian distribution maximizes the differential entropy for random variables with the same variance \cite{CoverThomas}, together with the concavity of the $\log$ function. For (\ref{eqn:BCcondition2a}) to hold with equality, we must have
\begin{align}
h(U^N_j|Y^N_j,U^N_1,U^N_2,\ldots,U^N_{j-1})=h(U^N_j|Y^N_j),\quad j=2,3,\ldots,M,
\end{align}
and for (\ref{eqn:BCcondition3a}) to hold with equality it requires
\begin{align}
h(U^N_j|Y^N_j)=\frac{N}{2}\log [2\pi e(D_j+\sigma^2_{W_j} )],\quad j=2,3,\ldots,M.
\end{align}

It follows that
\begin{align}
\exp\left[\frac{2}{N}\sum_{j=1}^mI(U^N_j;Y^N_j|U^N_1,U^N_2,\ldots,U^N_{j-1})\right]\geq \frac{|\Sigma_{S_{[1:m]}}+\Sigma_{W_{[1:m]}}|}{\Pi^m_{j=1}(D_j+\sigma^2_{W_j})}.\label{eqn:BCsourceA}
\end{align}

Combining (\ref{eqn:BCchannelA}) and (\ref{eqn:BCsourceA}),  we reach an outer bound
\begin{align}
\sum_{m=1}^M  (\sigma^2_{Z_{m}}-\sigma^2_{Z_{m+1}})\frac{|\Sigma_{S_{[1:m]}}+\Sigma_{W_{[1:m]}}|}{\Pi^m_{j=1}(D_j+\sigma^2_{W_j})}\leq P+\sigma^2_{Z_1}.
\end{align}
\end{proof}

\section{Proof of Lemma 1}
\label{appendix:lemma1}
\begin{proof}

For simplicity, let us define $B^{(0)}_m\triangleq \sigma^2_{X|Y_m}$ for $m=1,2,\ldots,M$.
 It is clear that
\begin{align}
B^{(0)}_1\geq B^{(0)}_2\geq\ldots\geq B^{(0)}_M>0.
\end{align}

Recall $\alpha_m\neq 0$. In the $k$-th step of the Cholesky decomposition $k=0,1,\ldots,M-1$, we claim that $\alpha_{M-k}\beta_{M-k}\geq 0$ and $\sum_{i=1}^{M-k-1} \alpha_i\beta_i\geq0$. %, with equality only if $\beta_i=0$ for $i=1,2,\ldots,M-k-1$.
Moreover, we claim the matrix partially diagonalized, denoted as $\Sigma^{(k)}_{V_{[1:M]}}$, has entries in the following form:
\begin{itemize}
\item $\gamma^{(k)}_{i,j}=0$, $j>M-k$ and $i\neq j$; by symmetry, $\gamma_{i,j}=0$, $i> M-k$ and $i\neq j$;
\item $\gamma^{(k)}_{m,m}=\frac{\beta_m}{\alpha_m}B^{(k)}_m\sum_{j=1}^{m-1}\alpha_j\beta_j$, $m> M-k$;
\item $\gamma^{(k)}_{i,j}=-\beta_i\beta_j B^{(k)}_j$, $j\leq M-k$ and $i<j$; by symmetry $\gamma^{(k)}_{i,j}=-\beta_i\beta_j B^{(k)}_i$, $i\leq M-k$ and $i>j$;
\item $\gamma^{(k)}_{m,m}=\frac{\beta_m}{\alpha_m}\left[B^{(k)}_m\sum_{j=1}^{m-1}\alpha_j\beta_j+\sum_{j=m+1}^{M-k}\alpha_j\beta_jB^{(k)}_j\right]$, $m\leq M-k$;
\end{itemize}
where the terms $(B^{(k)}_1,B^{(k)}_2,\ldots,B^{(k)}_M)$ are determined recursively as
\begin{align}
B^{(k)}_m=B^{(k-1)}_m,\quad m> M-k,
\end{align}
and
\begin{align}
B^{(k+1)}_m=\left\{
\begin{array}{ll}
B^{(k)}_m+\frac{\alpha_{M-k}\beta_{M-k}}{\sum_{i=1}^{M-k-1}\alpha_i\beta_i}B^{(k)}_{M-k},&\beta_m\sum_{i=1}^{M-k-1}\alpha_i\beta_i\neq 0\\
B^{(k)}_m,& \mbox{otherwise}
\end{array}
\right.\quad m\leq M-k,
\end{align}
for which
\begin{align}
B^{(k)}_1\geq B^{(k)}_2\geq\ldots\geq B^{(k)}_M.
\end{align}

The readers can verify $\gamma^{(k+1)}_{i,j}$'s are precisely the expression when using Cholesky factorization on the matrix with entries $\gamma^{(k)}_{i,j}$'s.  
First consider the case $k=0$. Setting $m=M$ in (\ref{eqn:gammamm}) gives
\begin{align}
\gamma_{M,M}&=\alpha^{-1}_M\beta_M(\sum_{m=1}^{M-1}\alpha_m\beta_m)\frac{P\sigma^2_{Z_M}}{P+\sigma^2_{Z_M}}. 
\end{align}
Recall the assumption that $\alpha_M\neq 0$. The matrix $\Sigma^{(0)}_{V_{[1:M]}}$ being positive semi-definite implies that $\gamma_{M,M}\geq 0$, and since $\sum_{m=1}^M \alpha_m\beta_m=1$, it follows that $\alpha_M\beta_M(1-\alpha_M\beta_M)\geq 0$, and thus $\alpha_M\beta_M\in [0,1]$, from which we have  $\sum_{m=1}^{M-1}\alpha_m\beta_m\geq 0$. Thus the claim is true when $k=0$. Next suppose it is also true for $k=k^*$, and we wish to prove the claim for $k=k^*+1$.

It is clear that due to the positive semidefinite requirement for the degenerate case when
\begin{align}
\frac{\beta_{M-k^*}}{\alpha_{M-k^*}}\sum_{j=1}^{{M-k^*}-1}\alpha_j\beta_j=0,
\end{align}
we must have for $i< M-k^*$
\begin{align}
\gamma^{(k^*)}_{i, M-k^*}=B_{M-k^*}^{(k)}\beta_i\beta_{M-k^*}=0,
\end{align}
and this Cholesky step can essentially be skipped, and $(B^{(k)}_1,B^{(k)}_2,\ldots,B^{(k)}_M)$ does not need to be updated. It is easy to check the recursive formula $\gamma^{(k^*+1)}_{m,m}$ for $m\leq M-k^*-1$ is indeed valid for this case. 
%When $\beta_{M-k^*}=0$, 
%\begin{align}
%\gamma^{(k^*)}_{M-k^*-1,M-k^*-1}&=\frac{\beta_{M-k^*-1}}{\alpha_{M-k^*-1}}\left[B^{(k^*)}_{M-k^*-1}\sum_{j=1}^{M-k^*-2}\alpha_j\beta_j+\alpha_{M-k^*}\beta_{M-k^*}B^{(k^*)}_{M-k^*}\right]\nonumber\\
%&=\frac{\beta_{M-k^*-1}}{\alpha_{M-k^*-1}}\left[B^{(k^*)}_{M-k^*-1}\sum_{j=1}^{M-k^*-2}\alpha_j\beta_j\right]\geq 0,
%\end{align}
%it follows that $\alpha_{M-k^*-1}\beta_{M-k^*-1}(\sum_{i=1}^{M-k^*-1} \alpha_i\beta_i-\alpha_{M-k^*-1}\beta_{M-k^*-1})\geq 0$, which implies that $\sum_{i=1}^{M-k^*-2} \alpha_i\beta_i\geq 0$. If  $\beta_{M-k^*}\neq 0$ then $\beta_i=0$ for  $i< M-k^*$, and thus $\sum_{i=1}^{M-k^*-2} \alpha_i\beta_i=0$. Thus the induction is valid in this case.

If $\gamma^{(k^*)}_{M-k^*,M-k^*}\neq 0$, then due to the assumption in the induction we have
\begin{align}
\alpha_{M-k^*}\beta_{M-k^*}>0,\qquad \sum_{j=1}^{{M-k^*}-1}\alpha_j\beta_j>0.
\end{align}
First observe that due to the assumption in the induction, we have
\begin{align}
B^{(k^*+1)}_1\geq B^{(k^*+1)}_2\geq\ldots\geq B^{(k^*+1)}_M>0.
\end{align}
Using the Cholesky factorization, we have for any $j\leq M-k^*-1$ and $i<j$
\begin{align}
\gamma^{(k^*+1)}_{i,j}&=\gamma^{(k^*)}_{i,j}-\beta_i\beta_{M-k^*}B^{(k^*)}_{M-k^*}\frac{\alpha_{M-k^*}\beta_j}{\sum_{t=1}^{{M-k^*-1}}\alpha_t\beta_t}\nonumber\\
&=-\beta_i\beta_j B^{(k^*)}_j-\beta_i\beta_{M-k^*}B^{(k^*)}_{M-k^*}\frac{\alpha_{M-k^*}\beta_j}{\sum_{t=1}^{{M-k^*-1}}\alpha_t\beta_t}\nonumber\\
&=-\beta_i\beta_j\left[B^{(k^*)}_j+\frac{\alpha_{M-k^*}\beta_{M-k^*}}{\sum_{t=1}^{{M-k^*-1}}\alpha_t\beta_t}B^{(k^*)}_{M-k^*}\right]\nonumber\\
&=-\beta_i\beta_jB^{(k^*+1)}_j.
\end{align}
Similarly for $m\leq M-k^*-1$
\begin{align}
\gamma^{(k^*+1)}_{m,m}&=\gamma^{(k^*)}_{m,m}-\beta^2_m\beta_{M-k^*}B^{(k^*)}_{M-k^*}\frac{\alpha_{M-k^*}}{\sum_{t=1}^{{M-k^*-1}}\alpha_t\beta_t}\nonumber\\
&=\frac{\beta_m}{\alpha_m}\left[B^{(k^*)}_m\sum_{j=1}^{m-1}\alpha_j\beta_j+\sum_{j=m+1}^{M-k^*}\alpha_j\beta_jB^{(k^*)}_j\right]-\beta^2_m\beta_{M-k^*}B^{(k^*)}_{M-k^*}\frac{\alpha_{M-k^*}}{\sum_{j=1}^{{M-k^*-1}}\alpha_j\beta_j}\nonumber\\
&=\frac{\beta_m}{\alpha_m}B^{(k^*)}_m\sum_{j=1}^{m-1}\alpha_j\beta_j+\frac{\beta_m}{\alpha_m}\sum_{j=m+1}^{M-k^*-1}\alpha_j\beta_jB^{(k^*)}_j\nonumber\\
&\qquad+\alpha_{M-k^*}\beta_{M-k^*}\frac{\beta_m}{\alpha_m}\frac{\sum_{j=1}^{{M-k^*-1}}\alpha_j\beta_j-\alpha_m\beta_m}{\sum_{j=1}^{{M-k^*-1}}\alpha_j\beta_j}B^{(k^*)}_{M-k^*}\nonumber\\
&=\frac{\beta_m}{\alpha_m}\left[B^{(k^*)}_m+\frac{\alpha_{M-k^*}\beta_{M-k^*}}{\sum_{t=1}^{{M-k^*-1}}\alpha_t\beta_t}B^{(k^*)}_{M-k^*}\right]\sum_{j=1}^{m-1}\alpha_j\beta_j\nonumber\\
&\qquad+\frac{\beta_m}{\alpha_m}\sum_{j=m+1}^{M-k^*-1}\alpha_j\beta_j\left[B^{(k^*)}_j+\frac{\alpha_{M-k^*}\beta_{M-k^*}}{\sum_{t=1}^{{M-k^*-1}}\alpha_t\beta_t}B^{(k^*)}_{M-k^*}\right]\nonumber\\
&=\frac{\beta_m}{\alpha_m}\left[B^{(k^*+1)}_m\sum_{j=1}^{m-1}\alpha_j\beta_j+\sum_{j=m+1}^{M-k^*-1}\alpha_j\beta_jB^{(k^*+1)}_j\right].
\end{align}

Now suppose ${\alpha_{M-k^*-1}}{\beta_{M-k^*-1}}<0$, which implies that
\begin{align}
\sum_{j=1}^{{M-k^*}-2}\alpha_j\beta_j=\sum_{j=1}^{{M-k^*}-1}\alpha_j\beta_j-{\alpha_{M-k^*-1}}{\beta_{M-k^*-1}}>0.
\end{align}
This however contradicts with the positive semidefinite requirement that
\begin{align}
\frac{\beta_{M-k^*-1}}{\alpha_{M-k^*-1}}\sum_{j=1}^{{M-k^*}-2}\alpha_j\beta_j\geq 0.
\end{align}
Thus the supposition ${\alpha_{M-k^*-1}}{\beta_{M-k^*-1}}<0$ cannot be true.
If ${\alpha_{M-k^*-1}}{\beta_{M-k^*-1}}=0$, then from the assumption in the induction, we have $\sum_{j=1}^{{M-k^*}-2}\alpha_j\beta_j=\sum_{j=1}^{{M-k^*}-1}\alpha_j\beta_j\geq 0$ thus this case does not cause any problem.
If ${\alpha_{M-k^*-1}}{\beta_{M-k^*-1}}\geq0$, then it also follows that $\sum_{j=1}^{{M-k^*}-2}\alpha_j\beta_j\geq 0$. The lemma is proved.
\end{proof}

\section{Proof of Corollary \ref{coro1}}
\label{appendix:coro1}
\begin{proof}
It suffices to consider the case that  $\sigma^2_{Z^+_m}= \sigma^2_{Z_m}$, $m=1,2,\ldots,m^*-1,m^*+1,\ldots,M$, and $\sigma^2_{Z^+_{m^*}}= \sigma^2_{Z_{m^*}}+\sigma^2_{\Delta Z}$. Denote $\Delta P=\frac{P\sigma^2_{Z^+_{m^*}}}{P+\sigma^2_{Z^+_{m^*}}}-\frac{P\sigma^2_{Z_{m^*}}}{P+\sigma^2_{Z_{m^*}}}$, and matrices constructed for the two channels as $\Sigma_{V_{[1:M]}}$ and $\Sigma^*_{V_{[1:M]}}$, respectively.
It is clear that
\begin{align}
\Sigma^*_{V_{[1:M]}}-\Sigma_{V_{[1:M]}}=\begin{pmatrix}
    \frac{\beta_1}{\alpha_1}\alpha_{m^*}\beta_{m^*}\Delta P&0&\ldots&-\beta_1\beta_{m^*}\Delta P& 0&\ldots&0\\
    0&\frac{\beta_2}{\alpha_2}\alpha_{m^*}\beta_{m^*}\Delta P&\ldots&-\beta_2\beta_{m^*}\Delta P& 0&\ldots&0\\
    ...\\
   -\beta_1\beta_{m^*}\Delta P&   -\beta_2\beta_{m^*}\Delta P&\ldots&\frac{\beta_{m^*}}{\alpha_{m^*}}\Delta P\sum_{i=1}^{m^*-1}\alpha_i\beta_i&0&\ldots&0\\,
   0& 0&\ldots&0&0&\ldots&0\\
   \vdots& \vdots&\vdots&\vdots&\vdots&\vdots&\vdots\\
   0& 0&\ldots&0&0&\ldots&0\\
\end{pmatrix}.\label{eqn:incremental}
\end{align}
However, it is easily seen that this matrix is positive semidefinite since the first $m^*-1$ diagonal terms are non-negative, and we can remove all the other terms through symmetric elimination, {\em i.e.,} the Cholesky factorization step.
It follows that
\begin{align}
&\Sigma^*_{V_{[1:M]}}-\Sigma_{S_{[1:M]}}+P\bar{\beta}\bar{\beta}^t
%\begin{pmatrix}
%    \beta_1\\
%    \beta_2\\
%    ...\\
%    \beta_M
%\end{pmatrix}\begin{pmatrix}
%   \beta_1& \beta_2& ...&   \beta_M
%\end{pmatrix}
\nonumber\\
&=[\Sigma^*_{V_{[1:M]}}-\Sigma_{V_{[1:M]}}]
+\left[\Sigma_{V_{[1:M]}}-\Sigma_{S_{[1:M]}}+P\bar{\beta}\bar{\beta}^t
%\begin{pmatrix}
%    \beta_1\\
%    \beta_2\\
%    ...\\
%    \beta_M
%\end{pmatrix}\begin{pmatrix}
%   \beta_1& \beta_2& ...&   \beta_M
%\end{pmatrix}
\right]
\end{align}
is positive semidefinite since it is a summation of two  positive semidefinite matrices.
\end{proof}

\section{Proof of Corollary \ref{coro2}}

\begin{proof}
First note that the entries in matrix $\Pi$,
\begin{align}
\sum_{i=1}^M\rho_{j,i}\alpha_i=P\beta_j,\quad j=1,2,\ldots,M.
\end{align}

For the ``if'' direction, we choose a $\sigma^2_{Z}$ such that (\ref{eqn:Nlarge}) holds, which is always possible when $\sigma^2_{Z}$ is sufficiently large. This implies that for the channel $\sigma^2_{Z_1}=\sigma^2_{Z_2}=\ldots=\sigma^2_{Z_M}=\sigma^2_{Z}$, condition (\ref{eqn:matchingspecial}) holds, and thus it is a matched channel.

For the ``only if'' direction, it follows from Corollary 1 that matching must hold for the degraded channel with noise power $\sigma^2_{Z_1'}=\sigma^2_{Z_2'}=\ldots=\sigma^2_{Z_M'}\triangleq \sigma^2_{Z}=\sigma^2_{Z_1}$. The requirement (\ref{eqn:semidefinite}) implies
\begin{align}
\frac{P\sigma^2_{Z}}{P+\sigma^2_{Z}}\mbox{diag}\left(\frac{\beta_1}{\alpha_1},\frac{\beta_2}{\alpha_2},\ldots,\frac{\beta_M}{\alpha_M}\right)+\left(P-\frac{P\sigma^2_{Z}}{P+\sigma^2_{Z}}\right)\bar{\beta}\bar{\beta}^t
%\begin{pmatrix}
%    \beta_1\\
%    \beta_2\\
%    ...\\
%    \beta_M
%\end{pmatrix}\begin{pmatrix}
%   \beta_1& \beta_2& ...&   \beta_M
%\end{pmatrix}
\succeq \Sigma_{S_1,S_2,\ldots,S_M}. \label{eqn:matchingspecial}
\end{align}
This, together with Lemma 1, implies that $\alpha_i\beta_i>0$, since otherwise the left hand side is a rank deficient. Multiplying both sides of (\ref{eqn:matchingspecial}) from left and from right by $\Pi$ gives
\begin{align}
\frac{\sigma^2_{Z}}{P+\sigma^2_{Z}} I_M+\frac{P}{P+\sigma^2_{Z}}\begin{pmatrix}
    \sqrt{\alpha_1\beta_1}\\
    \sqrt{\alpha_2\beta_2}\\
    ...\\
    \sqrt{\alpha_M\beta_M}
\end{pmatrix}\begin{pmatrix}
   \sqrt{\alpha_1\beta_1}& \sqrt{\alpha_2\beta_2}& ...&   \sqrt{\alpha_M\beta_M}
\end{pmatrix}\succeq \Pi\Sigma_{S_{[1:M]}}\Pi.
\end{align}
Notice that $\bar{v}^t_1=(\sqrt{\alpha_1\beta_1}, \sqrt{\alpha_2\beta_2}, ...,   \sqrt{\alpha_M\beta_M})$ is in fact an eigenvector corresponding to the eigenvalue $1$ for the matrix $\Pi\Sigma_{S_{[1:M]}}\Pi$, easily verified using (\ref{eqn:alphatobeta}). We can write the eigen-decomposition of the matrix $\Pi\Sigma_{S_{[1:M]}}\Pi$ as
\begin{align}
\Pi\Sigma_{S_1,S_2,\ldots,S_M}\Pi=\bar{v}_1\bar{v}^t_1+\sum_{i=2}^M\lambda_i\bar{v}_i\bar{v}^t_i,
\end{align}
where $\lambda_2,\lambda_3,\ldots,\lambda_M$ are the other eigenvalues of $\Pi\Sigma_{S_1,S_2,\ldots,S_M}\Pi$, and $\bar{v}_2,\bar{v}_3,\ldots,\bar{v}_M$ are the corresponding eigenvectors. It follows that
\begin{align}
\frac{\sigma^2_{Z}}{P+\sigma^2_{Z}} I_M\succeq \frac{\sigma^2_{Z}}{P+\sigma^2_{Z}}\bar{v}_1\bar{v}^t_1+\sum_{i=2}^M\lambda_i\bar{v}_i\bar{v}^t_i, \label{eqn:Nlarge}
\end{align}
which implies $\lambda_i\leq \frac{\sigma^2_{Z}}{P+\sigma^2_{Z}}$, $i=2,3,\ldots,M$.

\end{proof}

\section{Proof of Corollary \ref{coro3}}
\label{appendix:coro3}
\begin{proof}
Consider the entries of matrix $\mbox{diag}(\bar{\alpha})\Sigma_{W_1,W_2,\ldots,W_M}\mbox{diag}(\bar{\alpha})$, denoted as $\phi_{i,j}$, which is given as (by symmetry only the upper-triangle entries need to be specified)
\begin{align}
\phi_{j,m}=\alpha_j\beta_j\alpha_m\beta_m\frac{P^2}{P+\sigma^2_{Z_m}}-\alpha_j\alpha_m\rho_{j,m},\quad j<m,
\end{align}
and
\begin{align}
\phi_{m,m}=\alpha_m\beta_m\sum_{j=1}^{m-1}\alpha_j\beta_j\frac{P\sigma^2_{Z_m}}{P+\sigma^2_{Z_m}}+\alpha_m\beta_m\sum_{j=m+1}^M\alpha_j\beta_j\frac{P\sigma^2_{Z_j}}{P+\sigma^2_{Z_j}}-\alpha^2_m\rho_{m,m}+\alpha^2_m\beta^2_mP.
\end{align}
A necessary and sufficient condition for matching is that the matrix $\mbox{diag}(\bar{\alpha})\Sigma_{W_{[1:M]}}\mbox{diag}(\bar{\alpha})$ is positive semidefinite, since this implies the existence of the required $(W_1,W_2,\ldots,W_M)$ random vector, or equivalently the required random vector $(V_1,V_2,\ldots,V_M)$ as in the proof of  Theorem \ref{theorem:BC}. 

Observe that
\begin{align}
\sum_{j=1}^M\phi_{j,m}=\alpha_m\beta_mP\sum_{i=1}^{M}\alpha_j\beta_j - \alpha_m\beta_mP=0.\label{eqn:sumtozero}
\end{align}
If all the off-diagonal entries of $\mbox{diag}(\bar{\alpha})\Sigma_{W_{[1:M]}}\mbox{diag}(\bar{\alpha})$ are non-positive, then the matrix is diagonally dominant, and the diagonal entries are all positive by (\ref{eqn:sumtozero}), which implies that it is a positive semidefinite matrix \cite{matrixanalysis}. Thus as long as
\begin{align}
\alpha_j\beta_j\alpha_m\beta_m\frac{P^2}{P+\sigma^2_{Z_m}}\leq\alpha_j\alpha_m\rho_{j,m},\quad j<m \label{eqn:allcondition}
\end{align}
and
\begin{align}
\alpha_j\beta_j\alpha_m\beta_m\frac{P^2}{P+\sigma^2_{Z_j}}\leq\alpha_j\alpha_m\rho_{j,m},\quad j>m\label{eqn:allcondition2}
\end{align}
the positive semidefinite condition is satisfied. Note here that since $\mbox{diag}(\bar{\alpha})\Sigma_{S_{[1:M]}}\mbox{diag}(\bar{\alpha})$ has positive entries, $\alpha_i\beta_i>0$, and both sides of the above conditions are positive, which makes it possible for them to hold by choosing $\sigma^2_{Z_j}$'s properly. %because
%\begin{align}
%\alpha_j\beta_j\alpha_m\beta_m\frac{P^2}{P+N^*_j}\leq\alpha_j\alpha_m\rho_{j,m},\quad j>m\label{eqn:allcondition2}
%\end{align}
It is thus sufficient to have
\begin{align}
\sigma^2_{Z_m}\geq \max_{j< m} \frac{\beta_j}{\rho_{j,m}}\beta_mP^2-P,\quad m=2,3\ldots, M.
\end{align}
Together with Corollary 1, this implies the statement given in the corollary is indeed true.
\end{proof}

\section{Proof of (\ref{eqn:source2})}
\label{appendix:source2}
\begin{proof}
Since the matrix $\Sigma_{S_1,S_2,S_3}$ is positive definite, we have $-1<\rho_1<1$ and $-1<\rho_2<1$.
Since $\alpha_m=1$, we must also have $\beta_m>0$, $0<\lambda_2<1$ and $0<\lambda_3<1$ for matching to occur.
The first condition gives that
\begin{align}
\rho_1+\rho_2+1>0\quad\mbox{and}\quad 2\rho_1+1>0,
\end{align}
but the latter two require a few more steps. Notice that the condition $0<\lambda_3<1$ implies that
\begin{align}
\rho_1+2\rho_2>0.
\end{align}
If $\rho_2>2\rho^2_1-1$, then $0<\lambda_2<1$ implies
\begin{align}
2\rho_1\rho_2> -4\rho^2_1-3\rho_1.
\end{align}
If $\rho_1>0$, this yields a condition already implied by $\rho_1+2\rho_2>0$; on the other hand, $\rho_1\leq 0$ is an impossible case.
It can be verified that $\rho_2<2\rho^2_1-1$ is also an impossible case. Thus we must have $\rho_2>2\rho^2_1-1$ and $\rho_1>0$ simultaneously, from which we obtained the set of conditions given in (\ref{eqn:source2}).
\end{proof}

\end{appendices}

\section*{Acknowledgment}

The authors are extremely grateful for the insightful and detailed comments provided by the reviewers, which helped to improve the presentation of the paper.

\bibliographystyle{IEEEbib}

\end{document}